\documentclass[a4paper,11pt]{article}
\pdfoutput=1

\usepackage{jheppub}
\usepackage{tikz}
\usetikzlibrary{shapes}
\usetikzlibrary{plotmarks}
\usepackage{tikz-3dplot}

\newcommand{\ev}[1]{\langle #1\rangle}
\DeclareMathOperator{\Map}{Map}

\title{From 5d Flat Connections to 4d Fluxes\\ (the Art of Slicing the Cone)}
\author{Jim Lundin,}
\author[1]{Roman Mauch}
\author[2]{and Lorenzo Ruggeri}
\affiliation[1]{Department of Physics and Astronomy, Uppsala Universitet, 752 37 Uppsala, Sweden}
\affiliation[2]{Yau Mathematical Sciences Center, Tsinghua University, Beijing, 100084, China}
\emailAdd{jimeriklundin@gmail.com}
\emailAdd{roman.mauch@physics.uu.se}
\emailAdd{ruggeri@mail.tsinghua.edu.cn}
\preprint{UUITP-12/23}


\abstract{We compute the Coulomb branch partition function of the 4d $\mathcal{N}=2$ vector multiplet on closed simply-connected quasi-toric manifolds $B$. This includes a large class of theories, localising to either instantons or anti-instantons at the torus fixed points (including Donaldson-Witten and Pestun-like theories as examples). The main difficulty is to obtain flux contributions from the localisation procedure. We achieve this by taking a detour via the 5d $\mathcal{N}=1$ vector multiplet on closed simply-connected toric Sasaki-manifolds $M$ which are principal $S^1$-bundles over $B$. The perturbative partition function can be expressed as a product over slices of the toric cone. By taking finite quotients $M/\mathbb{Z}_h$ along the $S^1$, the locus picks up non-trivial flat connections which, in the limit $h\to\infty$, provide the sought-after fluxes on $B$. We compute the one-loop partition functions around each topological sector on $M/\mathbb{Z}_h$ and $B$ explicitly, and then factorise them into contributions from the torus fixed points. This enables us to also write down the conjectured instanton part of the partition function on $B$.}

\addtocontents{toc}{\protect\setcounter{tocdepth}{2}}

\begin{document}

\maketitle
\flushbottom

\section{Introduction}

    Four-dimensional $\mathcal{N}=2$ supersymmetric quantum field theories (SQFTs) have proven, over the years, to be an excellent laboratory for studying connections between QFT and mathematics and for enhancing our insight into more realistic QFTs. The work of Witten \cite{Witten:1988ze} showed how to interpret Donaldson invariants of four-manifolds \cite{Donaldson:1985zz} in terms of observables in a topological subsector of the full $\mathcal{N}=2$ SQFT counting instantons. Later, the derivation of the Seiberg-Witten exact prepotential of an $SU(2)$ $\mathcal{N}=2$ gauge theory in the IR \cite{Seiberg:1994rs} greatly improved our understanding of strongly coupled QFTs. This result relied on a version of electric-magnetic duality and on the holomorphicity of the moduli space of vacua. It was later reinterpreted by Nekrasov \cite{Nekrasov:2002qd}, as the non-equivariant limit of the equivariant volume of the instanton moduli space in the $\Omega$-background. This computation employed supersymmetric localisation (see \cite{Pestun:2016zxk} for a comprehensive review) to reduce the integral over the entire instanton moduli space to a discrete set of fixed points under a $T^2$-isometry. The Nekrasov partition function is the building block of Pestun's result \cite{Pestun:2007rz} for the exact partition function of the $\mathcal{N}=2^\ast$ theory on $S^4$: on this background, the field strength localises to point-like instantons at the north pole and point-like anti-instantons at the south pole. The partition function is obtained by gluing two copies of the Nekrasov partition function, one for instantons and one for anti-instantons. Subsequently, a plethora of exact results for SQFTs have been obtained. Among these, related to our work are studies of $\mathcal{N}=2$ topologically twisted theories on compact toric manifolds \cite{Bawane:2014uka,Rodriguez-Gomez:2014eza,Bershtein:2015xfa,Bershtein:2016mxz,Bonelli:2020xps}. See also \cite{Hosseini:2018uzp,Closset:2022vjj} for topologically twisted theories on $M_4\times S^1$, where $M_4$ is a compact 4d toric Kähler manifold.

    More recently, in \cite{Festuccia:2018rew,Festuccia:2019akm,Festuccia:2020yff} a framework was developed to generalise Pestun's result to any compact simply-connected four-manifold with a $T^2$-isometry with isolated fixed points and an arbitrary distribution of (anti-)instantons at those points. The partition function is conjecturally obtained by gluing (anti-)instanton Nekrasov partition functions. However, this is not the full story, as four-manifolds $B$ with non-trivial $H^2(B;\mathbb{Z})$ also admit gauge configurations where the field strength has flux on non-contractible cycles. The way to see this is as follows: for non-Abelian gauge groups $G$, in order to solve the localisation equations, one chooses a ``diagonal'' gauge such that scalars $\tilde{\phi},\varphi$ in the locus take values in the Cartan subalgebra. However, even for the trivial gauge bundle, for $H^2(B;\mathbb{Z})\neq0$, the maps $\tilde\phi,\varphi\in\Map(B,\mathfrak{g})$ might be topologically non-trivial, in which case there is an obstruction to choosing diagonal gauge. Such a choice is only possible if the Cartan component of the gauge field is a connection of a non-trivial torus-bundle \cite{Blau:1994rk}. It was conjectured in \cite{Nekrasov:2003vi,Festuccia:2018rew} that such flux contributions enter as a shift of the Coulomb branch parameter in each copy of the Nekrasov partition function. The full partition function is then obtained as a sum over these flux sectors. In the literature \cite{Nekrasov:2003vi,Nakajima:2003pg,Nakajima:2003uh,Gottsche:2006tn,Bawane:2014uka,Bershtein:2015xfa,Bershtein:2016mxz,Hosseini:2018uzp,Crichigno:2018adf,Festuccia:2018rew,Festuccia:2019akm,Bonelli:2020xps}, the sum is over equivariant fluxes which are then constrained by stability equations.
     
    The main objective of this work is to compute the one-loop partition function around fluxes, for the 4d $\mathcal{N}=2$ vector multiplet on compact simply-connected quasi-toric manifolds. In particular, we allow the theories to localise to different distributions of (anti-)instantons at the torus fixed points (also known as \textit{Pestunization}).
    We explicitly compute the one-loop partition function of these theories around each flux sector, whose factorised form\footnote{By ``factorised'' we mean an expression for the partition function in terms of a product over its contributions from individual torus fixed points on the manifold. This notion was introduced in \cite{Pasquetti:2011fj,Beem:2012mb}.} agrees with the form conjectured in \cite{Nekrasov:2003vi,Festuccia:2018rew}. From this form we can read out the shifts of the Coulomb branch parameter at each torus fixed point by the fluxes and write down the Coulomb branch partition function\footnote{This is the holomorphic ``integrand'' of the full partition function and we make this distinction since we will not concern ourselves with the remaining integral over the Coulomb branch parameter in this work.}, including instanton and classical parts. 
    
    The main difficulty is obtaining flux solutions from the 4d localisation equations. We circumvent this issue by taking a detour via the 5d $\mathcal{N}=1$ vector multiplet on closed toric Sasakian manifolds. This setup is well-studied \cite{Kallen:2012cs,Kallen:2012va,Imamura:2012efi,Lockhart:2012vp,Kim:2012qf,Qiu:2013pta,Qiu:2013aga,Qiu:2014oqa,Alday:2015lta,Festuccia:2016gul}, in particular for $S^5$ and $Y^{p,q}$ spaces (see also \cite{Qiu:2016dyj} for a review). In order to, ultimately, return to 4d, we assume the five-dimensional manifold $M$ to be a (non-trivial) principal $S^1$-bundle
    \begin{equation}\label{eq--1.fibration}
        \begin{tikzcd}[column sep=small]
            S^1\ar[r,hook] & M\ar[r,"p"] & B
        \end{tikzcd}
    \end{equation}
    over $B$. We can usually have multiple different such bundles over $B$, which lead to different (anti-)instanton distributions (depending on the relative orientation of the free $S^1$-direction with respect to the Reeb vector). For simply-connected $M$, the perturbative partition function was computed in \cite{Qiu:2014oqa} and is given by
    \begin{equation*}
        Z_M=\prod_{\alpha\in\Delta}\prod_{\Vec{n}\in \mathcal{C}\cap\mathbb{Z}^3}(\vec{n}\cdot\vec{\R}+ \ii\alpha(a) )\prod_{\Vec{n}\in \mathring{\mathcal{C}}\cap\mathbb{Z}^3}(\vec{n}\cdot\vec{\R}- \ii\alpha(a) ).
    \end{equation*}
    with $\vec{\R}$ the Reeb vector and $\mathcal{C}$ the three-dimensional moment map cone of the $T^3$-action on $M$. We first rewrite this result as a product over two-dimensional slices $\mathcal{B}_t$ of the cone along the free $S^1$-direction (as displayed in \autoref{fig--3.2.slices}), labelled by their integral charge $t$ under the $S^1$: 
    \begin{equation*}
        Z_M= \prod_{\alpha\in\Delta}\prod_
        t\prod_{(n_1,n_2)\in\mathcal{B}_t}(n_1\epsilon_1+n_2\epsilon_2+t\tfrac{\R_3}{l_3}+ \ii\alpha(a) )\prod_{(n_1,n_2)\in\mathring{\mathcal{B}}_t}(n_1\epsilon_1+n_2\epsilon_2+t\tfrac{\R_3}{l_3}- \ii\alpha(a) ).
    \end{equation*}
    Here, $\epsilon_1,\epsilon_2$ are the equivariance parameters corresponding to the remaining $T^2$-action and $l_3$ some constant.
    
    As a first step towards $B$, we introduce a quotient of the simply-connected $M$ along the free $S^1$ by a finite subgroup, $X=M/\mathbb{Z}_h$. The quotient has $\pi_1(X)=\mathbb{Z}_h$ and the localisation locus must be extended to include topologically non-trivial flat connections, valued in the Cartan of the gauge algebra. The effect of the quotient on the partition function is to discard most slices and only keep the ones for which $t=c_1\mmod h$, controlled by the first Chern class $c_1$ of the corresponding $U(1)$-bundles\footnote{Here and below, when we write $c_1\mmod h$ we are implicitly using the isomorphism $H^2(X)\simeq\mathbb{Z}_h\oplus\mathbb{Z}^{b_2(X)}$ under which $c_1$ takes values in $\mathbb{Z}_h$ (see \autoref{app-topology}).}. The partition function on $X$ is then given by a sum over these topological sectors. We suspect that this procedure can be interpreted as gauging a $\mathbb{Z}_h$ 0-form symmetry on $M$.
    
    Finally, we return to $B$ by taking the limit of large $h$. The non-trivial connections in the 5d locus on $X$, upon reduction, become connections in the 4d locus on $B$ with flux on a two-cycle. In terms of the partition function, instead of a product over different slices, for large $h$ only a single slice of the cone survives, labelled by the aforementioned flux $\mathfrak{m}$:
    \begin{equation*}   
        Z_B= \prod_{\alpha\in\Delta}\prod_{(n_1,n_2)\in\mathcal{B}_\mathfrak{m}}(n_1\epsilon_1+n_2\epsilon_2+\alpha(\mathfrak{m})\tfrac{\R_3}{l_3}+\ii\alpha(a))\prod_{(n_1,n_2)\in\mathring{\mathcal{B}}_\mathfrak{m}}(n_1\epsilon_1+n_2\epsilon_2+\alpha(\mathfrak{m})\tfrac{\R_3}{l_3}-\ii\alpha(a)).
    \end{equation*} 
    Note that one can think of $Z_M$ as counting holomorphic functions\footnote{Since $C(M)$ is a toric manifold, the integral lattice of the toric cone $\mathcal{C}$ is in 1:1 correspondence to holomorphic functions on $C(M)$. See \cite{Qiu:2016dyj} for precise relations to the one-loop calculation.} on the metric cone $C(M)$. Then $Z_B$ counts precisely the holomorphic functions on $C(M)$ that have charge $c_1$ under rotations along the $S^1$-fibre. 
    For the full partition function, the topological sectors are summed over as before:
    \begin{equation*}
        \mathcal{Z}_B=\sum_{\mathfrak{m}}\int_{\mathfrak{h}}d a\;e^{-S_{\text{cl}}}\cdot Z_B\cdot Z_B^{\text{inst}}.
    \end{equation*}    
    The one-loop partition function $Z_B$ for flux $\mathfrak{m}$ can then be factorised using 5d techniques. Note that, depending on which fibration \eqref{eq--1.fibration} we choose, the remaining slice $\mathcal{B}_\mathfrak{m}$ can be either compact or non-compact. The theory on $B$ corresponding to the former is commonly known as the topologically twisted theory, localising to instantons on all torus fixed points (resulting in an elliptic complex; e.g. Donaldson-Witten theory). The latter case appears for more general distributions of (anti-)instantons (resulting in a transversally elliptic complex; e.g. Pestun's theory on $S^4$).

    The 5d detour described above makes the inclusion of fluxes into the 4d locus a simple consequence of the non-trivial nature of the fibration \eqref{eq--1.fibration}. Given that we obtain flux contributions on $B$ as the base of an $S^1$-fibration, one might wonder whether a similar technique could be applied to analyse non-Abelian features of the locus from non-Abelian fibrations over the four-manifold.

    Finally, as a disclaimer, whenever $H^2(M;\mathbb{Z})\neq0$, our procedure might only produce part of the sum over all possible fluxes in the full partition function on $B$. This is because our starting point in 5d is a localisation result \cite{Qiu:2014oqa}, whose locus does not include any flux. As a consequence, our result for the 4d partition function might miss out on these fluxes. In order to include them, it would be necessary to allow for a more general 5d localisation locus (possibly by relaxing reality conditons or including singular configurations, cf. \cite{Benini:2015noa,Closset:2015rna} in 3d and 2d).

    The outline is as follows. In \autoref{sec--2} we introduce some basic facts about toric Sasakian geometry and the condition under which these manifolds admit a free $S^1$-action. As an example, we consider $Y^{p,q}$. Then, in \autoref{sec--3}, we explain the supersymmetric setup on these manifolds and on their quotients. We also discuss how flat connections in $X$ lead to flux on $B$. In \autoref{sec--4} we perform in detail the steps of slicing and reducing outlined above and compute the one-loop partition function around flat connections on the quotient $X$ and around fluxes on the four-dimensional base $B$. In particular, we show how each flux sector arises as a single slice of the three-dimensional cone. The construction is illustrated by explicit examples in \autoref{sec--5}, considering $S^5$, $Y^{p,q}$, $T^{1,1}$ and $A^{p,q}$. The factorisation properties of the partition functions are studied in \autoref{sec--6} where, employing the shifts derived from the one-loop part around fluxes, we can finally write the full partition function, including instanton and classical contributions. We summarise our results in \autoref{sec--7} and comment on possible future directions. 

    This work is a generalisation of earlier work \cite{Lundin:2021zeb} on $\mathbb{CP}^2$ by two of the authors.

\section{Toric Sasaki-Manifolds}\label{sec--2}

    In this section, we briefly review some facts about toric Sasaki-manifolds on which we later place our theory. A detailed account of contact structures and Sasakian geometry can be found in \cite{Boyer:2008era}, for toric geometry see \cite{Fulton:1993}. Throughout the article, we assume all manifolds as being smooth and connected.

    \subsection{Toric Sasakian Geometry}
    
        \paragraph{Contact Structures.} 
            The basic underlying structure of Sasakian manifolds is a contact structure. It can be viewed as 
            the odd-dimensional cousin of a symplectic structure on even-dimensional manifolds. More precisely, on a $(2n-1)$-dimensional manifold $M$, consider a field $\xi\subset TM$ of hyperplanes on $M$. This field can be expressed as $\xi=\ker\kappa$, where $\kappa$ is a one-form\footnote{In fact, $\kappa$ is globally defined only if $\xi$ is co-orientable, i.e. the line bundle $TM/\xi$ is trivial. Equivalently, one can always find a global $\kappa$ with $\xi=\ker\kappa$ if both $M$ and $\xi$ are orientable.} on $M$. If $\kappa\wedge(\dd\kappa)^{n-1}\neq0$ everywhere\footnote{An equivalent formulation of this condition is $(\dd\kappa)^{n-1}|_{\xi}\neq0$. Hence, we anticipate that $\dd\kappa$ provides a symplectic structure for a suitable codimension-one submanifold.} on $M$, then $\xi$ is called a \textit{contact structure}, $\kappa$ a \textit{contact form} and $(M,\xi)$ a \textit{contact manifold}. Note that in this case $\kappa\wedge(\dd\kappa)^{n-1}$ defines a volume form of $M$ (i.e., in particular, $M$ must be orientable).  
            
            To a contact form $\kappa$ we can associate a vector field $\R$ defined by the equations 
            \begin{equation}\label{eq--2.1.reeb}
                \iota_{\SR}\kappa=1,\qquad\iota_{\SR}\dd\kappa=0.
            \end{equation}
            This is called the \textit{Reeb vector field}. Since we consider $M$ to be Riemannian, we demand some compatibility conditions between the metric and contact structure: the metric $g$ should be preserved, i.e. $\mathcal{L}_{\SR} g=0$ and there should exist an almost complex structure $J$ on $\ker\kappa$ which can be extended to $TM$ via $J(\R)=0$ such that for vector fields $X,Y$ on $M$, $g(X,Y)=\frac{1}{2}\dd\kappa(X,J(Y))+\kappa(X)\kappa(Y)$ (note that $(\ker\kappa,J|_{\ker\kappa})$ provides $M$ with an almost CR-structure). If these conditions are satisfied, $(M,g,\kappa,J)$ is called a \textit{K-contact manifold}. Note that a consequence of the second condition and \eqref{eq--2.1.reeb} is $\kappa=g(\R,\us)$.
            
            The Reeb vector field generates a flow on $M$ and its orbits (which are geodesics) can be viewed as the leaves of a one-dimensional foliation, the so-called \textit{Reeb foliation} $\mathcal{F}_{\SR}$. Note that the tangent bundle $TM$ can be orthogonally decomposed as $TM\simeq\ker\kappa\oplus L_{\SR}$ with respect to $g$, where $L_{\SR}$ is the (trivial) line bundle consisting of vectors tangent to the leaves of $\mathcal{F}_{\SR}$. 
            
            One way to characterise $M$ is by the regularity property of the Reeb foliation. $\mathcal{F}_{\SR}$ is called \textit{quasi-regular} if there is an integer $k$ such that for every point $x\in M$ there is a foliated chart through which each leaf passes at most $k$-times. $\mathcal{F}_{\SR}$ is called \textit{regular} if $k=1$ and \textit{irregular} if it is not quasi-regular\footnote{Note that in the case of a regular (quasi-regular, irregular) foliation we also talk about a regular (quasi-regular, irregular) Reeb vector field or contact form. However, different contact forms in a contact \textit{structure} can produce drastically different foliations, so it does not make sense to talk about a regular (quasi-regular, irregular) contact structure.}. Note that if $M$ is compact and $\mathcal{F}_{\SR}$ quasi-regular, the orbits are circles and we get a locally free $S^1$-action on $M$ (which is isometric since ${\R}$ is Killing). If $\mathcal{F}_{\SR}$ is regular, the $S^1$-action is free. Hence, in this case, we have a principal $S^1$-bundle $\pi:M\rightarrow B$ over the space of leaves $B=M/\mathcal{F}_{\SR}$ which is a compact, symplectic manifold with symplectic form $\omega$ such that $\pi^\ast\omega=\dd\kappa$. In the locally free case instead, $B$ is a symplectic orbifold. In contrast, in the irregular case, ${\R}$ generates a group action with orbits whose closure is isomorphic to a torus.
        
        \paragraph{Sasakian Manifolds.} 
            As mentioned above, a Sasakian manifold is a special case of a K-contact manifold and can be viewed as the
            odd-dimensional cousin of a Kähler manifold in even dimensions. More precisely, a K-contact manifold $(M,g)$ is Sasakian if its metric cone $(C(M),\bar{g})$ with $C(M)=\mathbb{R}_{>0}\times M$ and $\bar{g}=\dd r^2+r^2g$ is Kähler. The corresponding Kähler form and complex structure are
            \begin{equation}\label{eq--2.1.kahlerstructure}
                \bar{\omega}=\frac{1}{2}\dd(r^2\kappa),\qquad \bar{J}=J+\frac{1}{r}\zeta\otimes\dd r-r\partial_r\otimes\kappa.
            \end{equation}
            Here, $J$ and $\kappa$ are trivially extended to $C(M)$ and $\zeta=\bar{J}(r\partial_r)$ is the characteristic vector field. The Kähler condition is equivalent to $\bar{J}$ being covariantly constant, $\bar{\nabla}\bar{J}=0$, with respect to the Levi-Civita connection (see e.g. \cite{McDuff:2017}, lemma 4.2.5). It follows that $\|\zeta\|=r^2$, $\mathcal{L}_\zeta\bar{g}=0$ and $\zeta$ is tangent to hypersurfaces of constant $r$. Hence, $M$ can be naturally embedded into $C(M)$ as the hypersurface with $r=1$ and $\zeta$ can be identified with the Reeb vector field of $M$.
            
            As described above, we can characterise a Sasakian manifold as regular, quasi-regular, or irregular, depending on its Reeb foliation. For a compact regular (quasi-regular) Sasakian manifold, the space of leaves $B$ is a compact Kähler manifold (orbifold) (see \cite{Boyer:2008era}, theorem 7.1.3). In turn, $\mathcal{F}_{\SR}$ is a transverse Kähler foliation. Note that regular Sasakian manifolds are rare (e.g. $S^5$ with $B=\mathbb{CP}^2$) and most Sasakian manifolds are either quasi-regular or irregular (e.g. $Y^{p,q}$ which we discuss in more detail below). Since the aim of this work is to start from a five-dimensional Sasakian manifold and reduce, along a free direction, to four dimensions, it is clear that this direction, in general, has to be different from the Reeb vector field.
    
        \paragraph{Toric Sasakian Geometry.} 
            A $(2n-1)$-dimensional Sasakian manifold $M$ is called \textit{toric} if, on its cone $(C(M),\bar{g},\bar{\omega},\bar{J})$, there exists an effective, holomorphic $T^{n}$-action that is Hamiltonian and such that $\zeta\in\mathfrak{t}_{n}$ is an element\footnote{Note the abuse of notation by identifying an element in $\mathfrak{t}_{n}$ with its induced vector field.} of the Lie algebra of $T^{n}$. The Hamiltonian property implies existence of a $T^{n}$-equivariant moment map 
            \begin{equation}\label{eq--2.1.momentmap}
                \mu:C(M)\longrightarrow \mathfrak{t}_{n}^{\vee}\simeq\mathbb{R}^{n},
            \end{equation}
            where $\mathfrak{t}_{n}^\vee$ denotes the dual, such that
            \begin{equation}
                \forall v\in\mathfrak{t}_{n}:\;\dd\langle\mu,v\rangle=\iota_v\bar{\omega}.
            \end{equation}
            From \eqref{eq--2.1.kahlerstructure} it follows that (up to an additive constant) the moment map is determined by
            \begin{equation}\label{eq--2.1.hamiltonian}
                \langle\mu,v\rangle=\frac{1}{2}r^2\kappa(v).    
            \end{equation}
            Moreover, the image $\mu(C(M))\cup\{0\}$ is a \textit{strictly convex}, \textit{rational}, \textit{polyhedral cone} $\mathcal{C}$ in $\mathfrak{t}_{n}^\vee\simeq\mathbb{R}^{n}$ (see \cite{Lerman:2001zua}, theorem 4.2), i.e. $\mathcal{C}$ can be presented as
            \begin{equation}\label{eq--2.1.cone}
                \mathcal{C}=\{u\in\mathfrak{t}_{n}^\vee|\forall i=1,\dots,m:\langle u,v_i\rangle\ge0\}.
            \end{equation}
            Here, $v_i\in\mathfrak{t}_{n}$, $i=1,\dots,m$ are the ``inward-pointing normals'' of the facets (codimension-one faces) of the cone and $m$ the number of facets. The rationality means that the $v_i$ are elements of the lattice of circle subgroups of $T^{n}$, i.e. upon $\mathfrak{t}_{n}\simeq\mathbb{R}^{n}$ we have $v_i\in\mathbb{Z}^{n}$. Furthermore, we can assume $\{v_i\}_{i=1,\dots,m}$ to be minimal and primitive. Since the cone $\mathcal{C}$ is strictly convex, its base is a compact convex $(n-1)$-dimensional polytope; hence, $m\ge n$. An intuitive way of thinking about the moment map \eqref{eq--2.1.momentmap} is as a $T^{n}$-fibration over $\mathcal{C}$ where at the $i$th facet of the cone the circle subgroup specified by $v_i$ degenerates. We can also define the \textit{dual cone} $\mathcal{C}^\vee\subset\mathfrak{t}_{n}$ as
            \begin{equation}\label{eq--2.1.dualcone}
                \mathcal{C}^\vee=\{v\in\mathfrak{t}_{n}|\forall u\in\mathcal{C}^\vee:\langle u,v\rangle\ge0\},    
            \end{equation}
            which is again a convex, rational, polyhedral cone \cite{Sparks:2010sn}. Note that, for the characteristic vector field, $\zeta\in\mathring{\mathcal{C}}^\vee$ since $\langle\mu,\zeta\rangle=\frac{1}{2}r^2>0$ (here and in the following, $\mathring{\mathcal{C}}$ denotes the \textit{interior} of $\mathcal{C}$). Note that, once we pick a basis $\{e_i\}\subset T^{n}$, we are free to perform $SL(n,\mathbb{Z})$-transformations which in turn generate $SL(n,\mathbb{Z})$-transformations of $\im\mu$. Hence, the cones and polytopes are unique only up to such transformations.
            
            Recall that the toric Sasakian manifold $M$ is naturally embedded into $C(M)$ as $\{1\}\times M$. Hence, we can define a moment map on $M$ by restriction $\mu|_{\{1\}\times M}$ (which we again denote by $\mu$) and, using \eqref{eq--2.1.hamiltonian}, the image of $M$ under the moment map is given by
            \begin{equation}\label{eq--2.1.momentpolytope}
                \mu(M)=\{u\in\mathcal{C}|\langle u,\R\rangle=\tfrac{1}{2}\},
            \end{equation}
            i.e. the intersection of $\mathcal{C}$ with the \textit{characteristic plane} $\{u\in\mathfrak{t}^\vee_{n}|\langle u,\R\rangle=\frac{1}{2}\}$. This image is an $(n-1)$-dimensional compact convex polytope \cite{Boyer:2000}. Similarly to $C(M)$, we can view $M$ as a $T^{n}$-fibration where the base is now $\mu(M)$, i.e. $(n-1)$-dimensional. This is illustrated in \autoref{fig--2.1.cone} for the five-dimensional case ($n=3$) which is the relevant one for us. In particular, at the vertices of the polytope, only an $S^1$ survives whose orbit is a closed Reeb orbit ($\R\neq0$ everywhere on $M$). Hence, locally around the vertices, $M$ looks like $\mathbb{C}^2\times S^1$. 
            \begin{figure}
                \centering
                \begin{tikzpicture}[scale=1.1]
                    \draw[thick] (0,0)--(-1.6,2.4);
                    \draw[thick,dotted] (-1.6,2.4)--(-1.8,2.7);
                    \draw[thick] (0,0)--(1.6,3.2);
                    \draw[thick,dotted] (1.6,3.2)--(1.8,3.6);
                    \draw[thick] (0,0)--(.75,3.75);
                    \draw[thick,dotted] (.75,3.75)--(.82,4.1);
                    \draw[thick] (0,0)--(-.9,3.6);
                    \draw[thick,dotted] (-.9,3.6)--(-1,4);
                    \filldraw[thick,blue,fill=blue!20,fill opacity=.5] (-1,1.5)--(1,2)--(.55,2.75)--(-.6,2.4)--cycle;
                    \node at (1,1.2) {$\mathcal{C}$};
                    \node at (-.05,2.2) {$\mu(M)$};
                    \begin{scope}[shift={(4,0)}]
                        \filldraw[thick,blue,fill=blue!20,fill opacity=.5] (0,0)--(3,0)--(2.5,1.5)--(1,1.3)--cycle;
                        \draw[thick,-stealth] (.5,.65)--(.5+.317,.65-.244);
                        \draw[thick,-stealth] (1.5,0)--(1.5,.4);
                        \draw[thick,-stealth] (2.75,.75)--(2.75-.379,.75-.126);
                        \draw[thick,-stealth] (1.75,1.4)--(1.75+.053,1.4-.396);
                        \draw[fill] (1.3,.8) circle [radius=.04];
                        \draw[dashed] (1.3,.8)--(1.3,2) node [above] {$T^3$};
                        \draw[fill] (2.125,1.45) circle [radius=.04];
                        \draw[dashed] (2.125,1.45)--(2.125,2.65) node [above] {$T^2$};
                        \draw[fill] (0,0) circle [radius=.04];
                        \draw[dashed] (0,0)--(0,1.2) node [above] {$S^1$};
                    \end{scope}
                \end{tikzpicture}
                \caption{On the left, the image $\mu(M)$ of $M$ is depicted for $n=3$ and $m=4$ as the section of $\mathcal{C}$ with the characteristic plane. On the right, we show $\mu(M)$ with the inward-pointing normals and the $T^3$-fibration degenerating to $T^2$ at the edges and $S^1$ at the vertices.}
                \label{fig--2.1.cone}
            \end{figure}
            
            The cone structure encodes some topology of the toric Sasakian manifold $M$, namely, we have $\pi_1(M)\simeq\mathbb{Z}^{n}/\sspan_{\mathbb{Z}}\{v_i\}$ and $\pi_2(M)\simeq\mathbb{Z}^{m-n}$. Hence, if $\{v_i\}$ with $\mathbb{Z}$-coefficients span all of $\mathbb{Z}^{n}$ then $M$ is simply-connected. Moreover, if $n=3$ and $\pi_1(M)=0$, then $M$ is diffeomorphic to $(S^2\times S^3)^{\# k}$, where $k=m-n$ (see \cite{Sparks:2010sn}, prop. 5.3 and cor. 5.4).
            
            Finally, note that the metric cone $C(M)$ with its symplectic structure and Kähler metric can be completely recovered from the moment map cone $\mathcal{C}$ via Kähler reduction \cite{Lerman:2001zua}: if $\{v_i\}_{i=1,\dots,m}\subset\mathbb{Z}^n$ denotes the set of normals for $\mathcal{C}$ and $\Lambda\subset\mathbb{Z}^n$ the lattice they span, then there is a map of tori
            \begin{equation}\label{eq--2.1.tori}
                T^m\simeq\mathbb{R}^m/2\pi\mathbb{Z}^m\longrightarrow\mathbb{R}^n/2\pi\Lambda
            \end{equation}
            induced by the linear map $\mathbb{R}^m\rightarrow\mathbb{R}^n,\,E_i\mapsto v_i$ taking the $i$th canonical basis vector $E_i\in\mathbb{R}^m$ to the normal $v_i$. The kernel $\mathcal{K}$ of \eqref{eq--2.1.tori} is isomorphic to $T^{m-n}\times\Gamma$ ($\Gamma$ some finite Abelian group) and we get
            \begin{equation}\label{eq--2.1.reduction}
                C(M)=\mathbb{C}^m/\!/\mathcal{K}.
            \end{equation}
            The group $T^m/\mathcal{K}\simeq T^n$ acts symplectically on $C(M)$ and the image of the moment map is precisely $\mathcal{C}$ (see \cite{Lerman:2001zua}, theorem 2.18). This way, we can express the coefficients $\{\R_i\}$ of the Reeb vector field $\R=\sum_{a=1}^{n}\R_ae_a$, $\{e_a\}\subset\mathcal{C}^\vee$ in terms of the charges $\{\omega^{(j)}_i\}^{j=1,\dots,m-n}_{i=1,\dots,m}$ with which the $j$th $S^1\subset\mathcal{K}$ acts on $\mathbb{C}^m$. Moreover, note that, for a fixed toric Sasakian manifold $M$, the smooth Sasakian metrics are parametrised by the Reeb $\R\in\mathring{\mathcal{C}}^\vee$ \cite{Martelli:2005tp}. Hence, given a cone $\mathcal{C}^\vee$, deforming $\R$ in a way that leaves it inside the cone gives again a Sasakian structure on $M$, with a squashed metric. This deformation of the Reeb will be necessary for regular Sasakian manifolds (cf. $S^5$ in \autoref{sec--5.1}) in order to obtain an equivariant theory after dimensional reduction to the 4d topologically twisted theory. The deformation enables us to tune the equivariance parameters in the 4d theory.
            
        \paragraph{Free $S^1$-Action.}
            Henceforward, we focus on the \textit{five-dimensional} case, i.e. $n=3$ and the cones $\mathcal{C}, \mathcal{C}^\vee$ are three-dimensional.    
            It was mentioned above already that $\R$ is, in general, not generating a free action on $M$. Therefore, in order to be able to quotient or reduce along a free $S^1$, we have to find some element $\vec{\X}=\sum_{i=1}^{3}\X_ie_i$ in $\mathfrak{t}_n$ whose orbits form a regular foliation $\mathcal{F}_{\vec{\X}}$ of $M$. It is obtained as the solution to the following set of equations (where $v_{m+1}=v_1$):
            \begin{equation}\label{eq--2.1.free}
                \forall i=1,\dots,m: \vec{\X}\cdot(\vec{v}_i\times\vec{v}_{i+1})=\pm1,
            \end{equation}
            and the coefficients of $\vec{\X}$ have to be integral. This ensures that the $\vec{\X}$-orbits close and pass through each foliated chart of $\mathcal{F}_{\vec{\X}}$ only once (i.e. the stabiliser is trivial everywhere on $M$). In general, \eqref{eq--2.1.free} does not have a solution for arbitrary allotments of the sign, and in many cases there is no solution such that all signs are $+$ (or $-$). However, there always is such a solution in cases where $M$ is regular: $\X=\R$ the Reeb.
            Since we are ultimately interested in 4d theories obtained as the quotient of $M$ by a free $S^1$-action, we limit our considerations to manifolds and sign allotments for which \eqref{eq--2.1.free} has solutions. 
            
            Note that, since $\vec{\R}$ lies within the dual cone, we can write $\vec{\R}=\sum_{i=1}^m\lambda_i\vec{v}_i$ with $\lambda_i>0$ (cf. \eqref{eq--2.1.dualcone}). Hence, $\vec{\R}\cdot(\vec{v}_i\times\vec{v}_{i+1})>0$ for all $i$ and therefore \eqref{eq--2.1.free} determines whether $\vec{\X}$ and $\vec{\R}$ are parallel or anti-parallel at the vertices of the polytope $\mu(M)$.

    \subsection{An Example: $Y^{p,q}$}\label{subsec.Ypq}
    
        Let us now consider the concrete example $Y^{p,q}$. These are an infinite family of closed, five-dimensional toric Sasakian\footnote{In fact, the $Y^{p,q}$ are toric Sasaki-Einstein manifolds. These can be viewed as odd-dimensional cousins of Calabi-Yau manifolds, i.e. $C(Y^{p,q})$ is CY. This additional property places some restrictions on the cone $\mathcal{C}$ which we will mention \textit{en passant}.} manifolds with $p,q\in\mathbb{Z}$ such that $p>1$, $p>q$ for which explicit metrics are known \cite{Gauntlett:2004yd,Martelli:2004wu}. For $p$, $q$ coprime, i.e. $\gcd(p,q)=1$, $Y^{p,q}$ is simply-connected and topologically $S^2\times S^3$. Geometrically, it is a principal $S^1$-bundle $Y^{p,q}\rightarrow B$ over the product of an axially squashed and a round sphere $B=S^2\times S^2$, such that the first Chern number on the standard two-cycles are $p$ and $q$. 
        
        Locally, we can choose coordinates $\theta\in[0,\pi],\phi\in[0,2\pi)$ and $y\in[y_1,y_2],\psi\in[0,2\pi)$ as polar, azimuthal angles on the two spheres and $\gamma\in[0,2\pi)$ parametrising the $S^1$-fibre. In these coordinates, a metric can be given explicitly and the Reeb vector field is
        \begin{equation}\label{eq--2.2.reeb}
            \R=3\partial_\psi-\tfrac{1}{2\ell}\partial_\gamma
        \end{equation}
        with $\ell=q(3q^2-2p^2+p(4p^2-3q^2)^{1/2})^{-1}$.
        $Y^{p,q}$ has effectively acting isometry group $SO(3)\times U(1)^2$ for $p,q$ odd and $U(2)\times U(1)$ else \cite{Martelli:2004wu}. In both cases, we have an effective Hamiltonian $T^3$-action on $Y^{p,q}$ for the following choice of basis:
        \begin{equation}\label{eq--2.2.basis}
            \begin{aligned}
                e_1=&-\partial_\phi-\partial_\psi,\\
                e_2=&\,\partial_\phi-\tfrac{l}{2}\partial_\gamma,\\
                e_3=&\,\partial_\gamma.
            \end{aligned}
        \end{equation}
        Note that the choice \eqref{eq--2.2.basis} is different from the standard Killing vectors $\partial_\phi,\partial_\psi,\partial_\gamma$ which, in general, do not generate an \textit{effective} action with closed orbits. This is due to the submanifolds $y=y_1,y_2$ being lens spaces $S^3/\mathbb{Z}_k$, $S^3/\mathbb{Z}_l$ ($k=p+q$, $l=p-q$) on which the standard basis $\partial_\phi,\partial_\gamma$ does not generate an effective action with closed orbits\footnote{See \cite{Martelli:2004wu}, section 4 for a discussion on the subtleties related to effective torus actions on lens spaces.}.
        
        In the basis above, we obtain the following moment map for $C(Y^{p,q})$ \cite{Martelli:2004wu}:
        \begin{equation}\label{eq--2.2.cone}
            \mu=(-\tfrac{r^2}{6}(1-y)(\cos\theta-1),\tfrac{r^2}{6}(1-y)\cos\theta-\tfrac{r^2}{2}l\ell y,\ell r^2y);
        \end{equation}
        here, $r$ is the cone direction.
        At the edges of the cone, $T^3$ collapses into an $S^1$; on $Y^{p,q}$ this is the case at the poles of the two spheres ($\theta=0,\pi$ and $y=y_1,y_2$), i.e. the image of \eqref{eq--2.2.cone} is spanned by four edges. By fixing, for example, $r=1$ in the cone direction and evaluating $\mu$ at the poles above, we obtain the edge vectors
        \begin{equation}\label{eq--2.2.edge}
            \vec{u}_1=[0,0,1],\quad\vec{u}_2=[0,p,-1],\quad \vec{u}_3=[p+q,-q,-1],\quad \vec{u}_4=[p-q,q-p,1]
        \end{equation}
        and \textit{inward-pointing} primitive normals\footnote{We point out that there is a vector $\vec{\xi}=[1,0,0]$ such that $\vec{\xi}\cdot\vec{v}_i=1$ for all $i$. Cones satisfying this condition for some $\vec{\xi}$ are called Gorenstein, which is the case for all Sasaki-Einstein manifolds (see e.g. \cite{Festuccia:2016gul} for a discussion).}
        \begin{equation}\label{eq--2.2.normal}
            \vec{v}_1=[1,0,0],\quad \vec{v}_2=[1,1,p],\quad \vec{v}_3=[1,2,p-q],\quad \vec{v}_4=[1,1,0].
        \end{equation}
        The cone spanned by \eqref{eq--2.2.edge} is depicted in \autoref{fig.Y21cones} for $p=2,q=1$. 
        
        The Kähler cone of $Y^{p,q}$ can also be obtained by symplectic reduction of $\mathbb{C}^4$ under a $U(1)$ \cite{Boyer:2011ia,Qiu:2013pta}. If we denote by $\{E_j\}_{j=1,\dots,4}$ the generators of the $U(1)^4$ acting canonically on $\mathbb{C}^4$, then the reduction is performed along the $U(1)$ with charge $[-p,p-q,-p,p+q]$. The Reeb vector field $\R=\sum_{i=1}^3\R_ie_i$ can also be written as $\R=\sum_{j=1}^4\omega_jE_j$ and the $\R_i$ related to the general equivariance parameters $\omega_j$:
        \begin{equation}\label{eq--2.2.equivariance}
            \R_1=-\omega_1-\omega_2-\omega_3-\omega_4,\qquad\R_2=-\omega_2-2\omega_3-\omega_4,\qquad\R_3=-p\omega_2+(q-p)\omega_3.
        \end{equation}
        For the Reeb vector field \eqref{eq--2.2.reeb} we find
        \begin{equation}
            \omega_1=\omega_3=\left(\frac{3}{2}+\frac{1}{2\ell(p-q)}\right),\qquad\omega_2=0,\qquad\omega_4=-\frac{1}{\ell(p-q)}.
            \end{equation}
        
        The Reeb vector field here is quasi-regular if $\ell$ is rational and irregular else \cite{Martelli:2004wu}. In contrast, solving \eqref{eq--2.1.free} we find that free actions are generated by
        \begin{equation}\label{eq--2.2.fibre}
            \vec{\X}^\text{ex}=[0,0,1].
        \end{equation}
        Here, $\vec{\X}^\text{ex}$ is parallel to $\R$ only at the first two vertices and anti-parallel at the last two (``ex'' for exotic; see discussion at the end of \autoref{sec--3.3} or \cite{Festuccia:2019akm}).

\section{Super Yang-Mills on Sasakian 5-Manifolds}\label{sec--3}

    In this section, we discuss $\mathcal{N}=1$ super Yang-Mills (SYM) theory on the five-dimensional toric Sasakian manifold $M$. Furthermore, we always assume that $M$ has a free direction (i.e. \eqref{eq--2.1.free} has solutions). We start by introducing the $\mathcal{N}=1$ field content and supersymmetric actions for compact simply-connected $M$ in a suitably twisted way. Then we discuss the theory on finite quotients $M/\mathbb{Z}_h$ along a free direction and finally take $h\to\infty$ to reduce to four-dimensional $\mathcal{N}=2$ SYM.

    Let us stress right from the beginning that all results in this work are computed for the cohomologically twisted theories both in 5d and 4d. If we start on a spin manifold with the physical theory, spinors solving the rigid SUGRA background \cite{Festuccia:2011ws} can be used to twist the theory. This has been done in 5d $\mathcal{N}=1$ for Sasaki-Einstein manifolds \cite{Kallen:2012va} (for which existence of Killing spinors is guaranteed, see e.g. \cite{Sparks:2010sn}) and in 4d $\mathcal{N}=2$ for any compact spin manifold \cite{Festuccia:2018rew}. The twisting can be viewed as a ``change of variables'' and the partition functions of physical and twisted theory are expected to agree. However, not all manifolds we consider in this work are spin so there might be global obstructions to writing the physical theory. On the other hand, the twisted field content are just differential forms and therefore the twisted theory still makes sense globally on all manifolds considered.

    \subsection{Simply-Connected Case}\label{sec--3.1}

        \paragraph{Cohomological Complex.}
            The 5d $\mathcal{N}=1$ vector multiplet contains as bosonic fields a gauge field $A$, a real scalar $\sigma$, and a two-form $H$; the gauginos are given in terms of a one-form $\Psi$ and a two-form $\chi$; all fields (except $A$) are valued in the adjoint of the gauge group\footnote{For simplicity, we assume $G$ to be simply-connected.} $G$. 
            The two-forms $\chi,H$ satisfy the following projection condition:
            \begin{equation}
                P^{+}\chi=\chi,\qquad P^{+} H=H.
            \end{equation}
            The projector $P_\pm=\frac{1}{2}(1\pm \iota_{\SR}\star)$ maps two-forms in $\Omega^2(M)$ to their horizontal\footnote{We can decompose a form $\omega\in\Omega^\bullet(M)$ as $\omega={\R}\wedge\iota_{\SR}\omega+\tilde{\omega}$, with $\iota_{\SR}\tilde{\omega}=0$. The component $\tilde{\omega}$ is called \textit{horizontal} and the corresponding subspace is denoted $\Omega^\bullet_H(M)$.}, (anti-)self-dual component in $\Omega^{2\pm}_H(M)$. This is the five-dimensional analogue of the 4d projector onto (anti-)self-dual two-forms.
            The cohomological complex reads
            \begin{equation}\label{eq--3.1.complex}
                \begin{array}{lll@{\hskip 4em}lll}
                   QA & = & i\Psi, & Q\Psi & = & -\iota_{\SR} F+ \DD\sigma,\\
                   Q\chi & = & H, & QH & = & -iL_{\SR}^A\chi-[\sigma,\chi],\\
                   Q\sigma & = & -i\iota_{\SR}\Psi & & &\\
                \end{array}
            \end{equation}
            and the supercharge squares to
            \begin{equation}\label{eq--3.1.Qsquare}
                Q^2=-iL_{\SR}^A+G_\sigma.
            \end{equation}
            Here, $\DD\sigma$ and $F$ denote the covariant derivative and field strength with respect to $A$ (locally, $\DD\sigma=\dd\sigma-\ii[A,\sigma]$ and $F=\dd A-\ii A^2$), $L_{\SR}^A=L_{\SR}-\ii[\iota_{\SR} A,\us]$ is the covariant Lie derivative along the Reeb and $G_\sigma$ is a gauge transformation with parameter $\sigma$:
            \begin{equation}
                G_\sigma A=\DD\sigma,\qquad G_\sigma\Phi=\ii[\sigma,\Phi]
            \end{equation}
            with $\Phi$ any field in the adjoint of the gauge group.
            
            Finally, we want to remark that a deformation of $\R$ in the cohomological formalism can be implemented easily; one simply replaces $\R$ in the complex \eqref{eq--3.1.complex} with its deformed version. This is another advantage compared to the ordinary formulation, where more background fields would have to be introduced in the (rigid) supergravity multiplet in order to preserve supersymmetry.

        \paragraph{Action and BPS Locus.}    
            In terms of the twisted fields introduced above, the SYM action can be written as
            \begin{equation}\label{eq--3.1.YMaction}
                S_\text{YM}=-\left(CS_{3,2}(A+\sigma\kappa)+\ii\tr \int\frac{1}{g_\text{YM}^2}(\kappa\wedge d\kappa\wedge\Psi\wedge\Psi)\right)+QW_\text{vec},
            \end{equation}
            where
            \begin{align}
                &CS_{3,2}(A)=\tr \int\frac{1}{g_\text{YM}^2}\kappa\wedge F\wedge F,\label{eq--3.1.cs}\\ 
                &W_\text{vec}=\tr\int\frac{1}{g_\text{YM}^2}\left(\Psi\wedge\star(-\iota_{\SR} F-\DD\sigma)-\frac{1}{2}\chi\wedge\star H+2\chi\wedge\star F+\sigma\kappa\wedge d\kappa\wedge\chi\right).
            \end{align}
            Here, $g^2_\text{YM}$ denotes the 5d YM coupling constant. Note that the terms in \eqref{eq--3.1.YMaction} written in parentheses are $Q$-closed but not $Q$-exact and can be viewed as a supersymmetric observable.
            
            The partition function for this theory can be computed using localisation techniques\footnote{A comprehensive review of such techniques can be found in \cite{Pestun:2016zxk}.}. The localisation term $t\cdot QV$ which we add to \eqref{eq--3.1.YMaction} reads
            \begin{equation}
                V=\tr\int\Psi\wedge\star(-\iota_{\SR} F-\DD\sigma)-\frac{1}{2}\chi\wedge\star H+2\chi\wedge\star F.
            \end{equation}
            In order to obtain positive kinetic terms for $\sigma, H$ we Wick-rotate them, after which one finds the simpler localisation locus: 
            \begin{equation}\label{eq--3.1.loceqns}
                F^+_H=0,\quad \iota_{\SR} F=0,\quad \DD\sigma=0.
            \end{equation}
            The first two equations can equivalently be written as the single equation\footnote{In analogy to 4d, solutions to this equation automatically satisfy the YM equation.} $\star F=-\kappa\wedge F$ and its solutions are called contact instantons. Since we are only concerned with the trivial instanton sector in this work, we restrict to zero contact instanton-number for which the localisation locus reads \cite{Qiu:2016dyj}
            \begin{equation}\label{eq.BPSlocus}
                A=0,\quad \sigma= a\in i\mathfrak{g},
            \end{equation}
            i.e. $A$ is the trivial connection and $\sigma$ a constant valued in the Lie algebra $\mathfrak{g}$ of $G$, up to gauge transformations.
            The classical piece in the localised partition function is obtained by evaluating \eqref{eq--3.1.YMaction} at the locus above and only receives a contribution from $CS_{3,2}(A+\sigma\kappa)$:
            \begin{equation}\label{eq.classicalSYM}
                S_\text{cl}=-\tr \int\frac{1}{g_\text{YM}^2} a^2\kappa\wedge d\kappa\wedge d\kappa=-\frac{8\pi^3}{g_\text{YM}^2}\varrho\tr( a^2),
            \end{equation}
            where $\varrho\equiv\vol_M\!/\!\vol_{S^5}$ and volume form of $M$ is $\frac{1}{8}\kappa\wedge(\dd\kappa)^2$ (note that $\sigma$ is imaginary after Wick-rotation). Optionally, we could multiply the right-hand side by a parameter controlling the size of $M$.

    \subsection{Taking Finite Quotients}\label{sec--3.2}
    
        The main player in this work are finite quotients of the simply-connected toric Sasakian manifolds $M$ discussed above. More specifically, we always assume\footnote{Note that there always is a locally free $S^1\subset T^3$; precisely the one that agrees with the remaining, non-degenerate $S^1$ at the vertices of the polygon, cf. \autoref{fig--2.1.cone}.} that there is a free $S^1\subset T^3$ and take a quotient by a finite Abelian subgroup $\mathbb{Z}_h$ ($h\in\mathbb{N}_{>1}$ fixed) along this $S^1$. So we have a principal $\mathbb{Z}_h$-bundle 
        \begin{equation}\label{eq--3.2.fibration}
            \begin{tikzcd}[column sep=small]
                \mathbb{Z}_h\ar[r,hook] & M\ar[r,"\pi"] & X
            \end{tikzcd}
        \end{equation}
        with $X=M/\mathbb{Z}_h$ the finite quotient. Note that $X$ is no longer simply-connected but $\pi_1(X)\simeq\mathbb{Z}_h$ (see \autoref{app-topology}) and we can view $M$ as the $h$-sheeted connected (universal) cover of $X$. This means, intuitively, that $X$ and $M$ are ``the same'' locally and differences are only noticeable on a global level. In particular, since the vector field $\X$ generating the free $S^1$ is Killing, the metric on $M$ induces a metric on $X$ such that the covering is Riemannian. 
        
        In order for the theory in \autoref{sec--3.1} to descend to $X$ we first restrict to bundles over $M$ that are equivariant under deck transformations\footnote{Note that $\gamma\in\Aut(\pi)$ is an isometry since the covering $\pi$ is Riemannian.} $\gamma\in\Aut(\pi)\simeq\pi_1(X)$ of the covering $\pi$; these bundles descend to $X$. We mainly take the view of equivariant bundles on $M$ instead of bundles on $X$ given that we have good control over the theory on $M$. All fields (except the connection) in the cohomological formulation are (endomorphism-valued) differential forms\footnote{Here, $\mathfrak{g}_P$ denotes the associated bundle $\mathfrak{g}_P=P\times_\text{ad}\mathfrak{g}$ of the gauge bundle $P$.} $\omega\in\Omega^r(M;\mathfrak{g}_P)$ and they descend to $X$ if they are $\Aut(\pi)$-invariant. Namely, let $\omega$ be such a form. Then $\omega$ descends to a differential form on $X$ if
        \begin{equation}\label{eq--3.2.descendance}
            \forall\gamma\in\Aut(\pi): \gamma^\ast\omega=\omega.
        \end{equation}
        If we restrict our theory on $M$ to fields satisfying \eqref{eq--3.2.descendance} then also the cohomological complex \eqref{eq--3.1.complex} is invariant under deck transformations\footnote{This follows from various properties of the pullback, $\gamma^\ast\dd\omega=\dd\gamma^\ast\omega$, $\gamma^\ast(\omega\wedge\eta)=\gamma^\ast\omega\wedge\gamma^\ast\eta$, $\gamma^\ast\iota_{\SR}\omega=\iota_{(\gamma^{-1})_\ast{\R}}\gamma^\ast\omega$ as well as the fact that $\gamma_\ast{\R}={\R}$ since ${\R}$ and $\X$ commute (remember ${\R}\in\mathfrak{t}_3$).} and descends to $X$. Similarly, one can show that $S_\text{YM}$ in \eqref{eq--3.1.YMaction} descends to the action on $X$, i.e. the complex is again a symmetry of the action. Hence, as we expect, local expressions can be ``pushed down'' and we obtain the supersymmetric theory on $X$.
        
        However, since $X$ is not simply-connected, we should also account for field configurations arising from this non-trivial topology. In particular, there are $h$ topologically inequivalent $U(1)$-bundles over $X$, labelled by their first Chern class $c_1$ valued in $H^2(X;\mathbb{Z})=H^2(X)$ (henceforth, if no coefficients are specified, we always assume $\mathbb{Z}$), each admitting a flat connection. Thus, for functions on $M$, condition \eqref{eq--3.2.descendance} is too strong and we should relax it such that functions on $M$ are pushed down to sections over aforementioned line bundles on $X$. Specifically, given a function $f:M\rightarrow\mathbb{C}$ on $M$, by virtue of the periodicity in $\X$-direction, we can expand
        \begin{equation}
            f(x,\alpha)=\sum_{t\in\mathbb{Z}}f_t(x)\e{\ii t\alpha},
        \end{equation}
        where $\alpha\sim\alpha+2\pi$ parametrises the $S^1$ and $x$ the other four dimensions. Since $\X$ is a linear combination of the three $S^1$-directions such that \eqref{eq--2.1.free} is satisfied, $L$ can be different from $2\pi$. Clearly, $\gamma^\ast f=f$ implies $t=0\mmod h$. Instead of restricting to such functions only, we impose the more general condition
        \begin{equation}\label{eq--3.2.projcond}
            t=c_1\mmod h,
        \end{equation}
        where we use $H^2(X;\mathbb{Z})\simeq\mathbb{Z}^{b_2(X)}\oplus\mathbb{Z}_h$ ($b_2$ the second Betti number of $X$) and the fact that $c_1$ takes values in the $\mathbb{Z}_h$-part (see \autoref{app-topology}). In this way, we can view $f$ as a section of a flat line bundle over $X$ that acquires a phase $\exp(2\pi\ii t/h)$ around a loop $\gamma\in\pi_1(X)$, which is simply the holonomy around $\gamma$ for the flat connection of the bundle. Hence, the theory still descends from $M$ to $X$ for an $S^1$-action with weights satisfying \eqref{eq--3.2.projcond}. In terms of the moment map cone $\mathcal{C}$ introduced in \eqref{eq--2.1.dualcone} this means restricting to slices
        \begin{equation}\label{eq--3.2.slices}
            \mathcal{C}_t=\{\vec{v}\in\mathcal{C}\;|\;\ev{\vec{v},\vec{\X}}=t\}
        \end{equation}
        of $\mathcal{C}$, as depicted in \autoref{fig--3.2.slices}. The orientation of $\mathcal{C}_t$ inside $\mathcal{C}$ is determined by the free direction $\vec{\X}$. Notice that this is a generalisation of the slice \eqref{eq--2.1.momentpolytope} along the Reeb. In particular, if there are $\vec v\in\mathcal{C}$ such that $\ev{\vec v,\vec\X}\le0$, the resulting polytope is no longer compact (see, e.g., \autoref{fig.S5cones} or \autoref{fig.Y21cones}).
        \begin{figure}
            \centering
            \begin{tikzpicture}[scale=1.1]
                    \draw[thick] (0,0)--(-1.8,2.7);
                    \draw[thick] (0,0)--(1.8,3.6);
                    \draw[thick] (0,0)--(.902,4.51);
                    \draw[thick] (0,0)--(-1,4);
                    \filldraw[thick,blue,fill=blue!20,fill opacity=.5] (-.5,.75)--(.5,1)--(.2575,1.375)--(-.3,1.2)--cycle;
                    \filldraw[thick,blue,fill=blue!20,fill opacity=.5] (-1,1.5)--(1,2)--(.55,2.75)--(-.6,2.4)--cycle;
                    \filldraw[thick,blue,fill=blue!20,fill opacity=.5] (-.25,.375)--(.25,.5)--(.12875,.6875)--(-.15,.6)--cycle;
                    \filldraw[thick,blue,fill=blue!20,fill opacity=.5] (-.75,1.125)--(.75,1.5)--(.4125,2.0625)--(-.45,1.8)--cycle;
                    \filldraw[thick,blue,fill=blue!20,fill opacity=.5] (-1.5,2.25)--(1.5,3)--(.825,4.125)--(-.9,3.6)--cycle;
                    \filldraw[thick,blue,fill=blue!20,fill opacity=.5] (-1.25,1.875)--(1.25,2.5)--(.6875,3.4375)--(-.75,3)--cycle;
                    \node at (1,1.2) {$\mathcal{C}$};
                    \node at (-.05,4.2) {$\mathcal{C}_{c_1+5h}$};
                    \begin{scope}[shift={(7,0)}]
                        \draw[thick] (0,0)--(-1.8,2.7);
                        \draw[thick] (0,0)--(1.8,3.6);
                        \draw[thick] (0,0)--(.902,4.51);
                        \draw[thick] (0,0)--(-1,4);
                        \filldraw[thick,blue,fill=blue!20,fill opacity=.5] (-.25,.375)--(.25,.5)--(.12875,.6875)--(-.15,.6)--cycle;
                        \filldraw[thick,blue,fill=blue!20,fill opacity=.5] (-.75,1.125)--(.75,1.5)--(.4125,2.0625)--(-.45,1.8)--cycle;
                        \filldraw[thick,blue,fill=blue!20,fill opacity=.5] (-1.25,1.875)--(1.25,2.5)--(.6875,3.4375)--(-.75,3)--cycle;
                        \node at (1,1.2) {$\mathcal{C}$};
                        \node at (-.05,3.6) {$\mathcal{C}_{c_1+2h^\prime}$};
                    \end{scope}
            \end{tikzpicture}
            \caption{Different slices $\mathcal{C}_{t}$ of the cone $\mathcal{C}$ for (left) $t=c_1+n\cdot h$ ($n=0,\dots,5$) and (right) $t=c_1+n\cdot h^\prime$ ($n=0,1,2$), where $h^\prime=2h$. Hence, as $h$ becomes larger, the slices space out. Note that here we have chosen $\vec\X$ such that $\ev{\vec v,\vec\X}>0$ for all $\vec v\in\mathbb{C}$, hence the slices are compact (cf. \autoref{fig.S5cones} and \ref{fig.Y21cones} with non-compact slices).}\label{fig--3.2.slices}
        \end{figure}
        
        Another consequence of $X$ not being simply-connected is the existence of non-trivial flat connections $A$ for the gauge bundle $G\hookrightarrow P\rightarrow X$ which now contribute to the locus \eqref{eq.BPSlocus}. It is well known that (up to gauge transformations) flat connections are in one-to-one correspondence with representations of $\pi_1(X)$ in $G$ (up to conjugation) via their holonomy $\text{h}_{A}:\pi_1(X)\rightarrow G$ around loops $[\gamma]\in\pi_1(X)$ (see e.g. \cite{Taubes:2011}). For example, $G=U(N)$ yields
        \begin{equation}\label{eq--3.16.flatconnection}
            \Hom(\pi_1(X),U(N))\simeq\{\diag(\e{2\pi\ii m_1/h},\dots,\e{2\pi\ii m_N/h})|m_i\in\mathbb{N}_{\le h}\}
        \end{equation}
        and conjugation acts simply by permuting the exponents. Hence, flat connections for $G=U(N)$ are labelled by an array of integers
        \begin{equation}\label{eq--3.2.flat}
            \mathfrak{m}\equiv\diag(m_1,\dots,m_N)\in\mathbb{N}^{N\times N}_{\le h},\qquad m_i\le m_{i+1}.
        \end{equation}
        Note that, since $\pi_1(X)$ is Abelian, all representations in $G$ will be contained in its maximal Abelian (i.e. Cartan) subgroup.
        We henceforth use the symbol $\mathfrak{m}$ to denote the elements in (the Cartan subalgebra of) $\mathfrak{g}$ labelling flat connections also for generic $G$. 
        
        Next, we revisit the BPS locus \eqref{eq.BPSlocus}. On $X$, in addition to the trivial connection, the contact instanton equation in the zero-instanton sector is solved by any flat connection. 
        In particular, the topologically non-trivial flat connections $A_\mathfrak{m}$ discussed above are part of the BPS locus now. Then, for fixed $\mathfrak{m}$, the solution to $\DD\sigma=0$ is again given by a constant,
        \begin{equation}\label{eq--3.2.quotlocus}
            \sigma=a\in\ii\mathfrak{g},\qquad [\mathfrak{m},a]=0,
        \end{equation}
        up to gauge transformations. Therefore, in general, $G$ is broken to its Cartan subgroup $U(1)^{\rk G}$ and the path integral over gauge bundles $P$ reduces to principal $U(1)^{\rk G}$-bundles which are classified precisely by $\mathfrak{m}$. Therefore, the free $S^1$ acts on fluctuations of the adjoint-valued scalar fields around the BPS locus with infinitesimal weight
        \begin{equation}\label{eq--3.18.projcond}
            t=\alpha(\mathfrak{m})\mmod h
        \end{equation}
        according to \eqref{eq--3.2.projcond} and the discussion thereafter.
        Note that the complex \eqref{eq--3.1.complex} does not mix topological sectors and hence, the form of the one-loop partition function obtained on $M$ is unchanged in each topological sector\footnote{Since fluctuations around the locus do not jump between sectors and the connections are flat, in particular, the index computation (cf. \cite{Qiu:2014oqa,Festuccia:2019akm}) that yields the one-loop determinant is unchanged.}, of course, up to imposing the descendance conditions on the fields and $S^1$-action discussed above. However, on $X$ we now have a sum of these one-loop pieces over all topological sectors, labelled by $\mathfrak{m}$. This will be further discussed in \autoref{subsec.generalformula}.

        Similar to the simply-connected case, the classical piece only receives a contribution from the Chern-Simons action $CS_{3,2}(A_{\mathfrak{m}}+a\kappa)$ on $X$:
        \begin{equation}\label{eq--3.20.quotclassical}
            S_\text{cl}(a,\mathfrak{m})=-\frac{8\pi^3}{g_\text{YM}^2}\frac{\varrho}{h}\tr( a^2),
        \end{equation}
        where we have used \eqref{eq--3.2.quotlocus} and $CS_{3,2}(A_\mathfrak{m})=0$ since $A_\mathfrak{m}$ is flat. The quotient introduces a factor of $\frac{1}{h}$. The independence of $S_\text{cl}$ from $\mathfrak{m}$ suggests that all topological sectors are weighted equally in the partition function.

    \subsection{Reduction to Four Dimensions}\label{sec--3.3}
        
        While the theory on finite quotients of $M$ is interesting in its own right, in this work we ultimately want to reduce to the 4d $\mathcal{N}=2$ theory in order to see how fluxes feature in the partition function. We achieve this by taking the order of the finite group $\mathbb{Z}_h$ we quotient by to be very large. This makes the free $S^1$ in $M$ shrink more and more until, in the limit where $h\to\infty$, the $S^1$ shrinks to a point and we are left with the base manifold $M/S^1$ which we denote by $B$ in the following (a more formal treatment of the limit can be found in \autoref{app-topology}). Note that, in general, $B$ is not toric Kähler if $\X$ is not aligned with the Reeb $\R$.

        \paragraph{Flux.}
            The bundles on $M$ that descend to $B$ are the ones equipped with connections whose holonomy around the free $S^1$ is trivial\footnote{See \cite{Festuccia:2016gul}, appendix A for a detailed discussion.} and fields in our theory on $M$ that descend to $B$ are the forms $\omega$ on $M$ that are invariant along $\X$, i.e. $\mathcal{L}_{\SX}\omega=0$ (the infinitesimal version of invariance under deck transformations discussed in \autoref{sec--3.2}). The reduction of the cohomological complex \eqref{eq--3.1.complex} and the action \eqref{eq--3.1.YMaction} is straightforward; it has been performed in \cite{Festuccia:2016gul,Festuccia:2019akm} and matches the four-dimensional cohomological complex and Yang-Mills action in \cite{Festuccia:2018rew}, except for the peculiarity that the four-dimensional Yang-Mills coupling is now position-dependent\footnote{Hence, the resulting 4d theory can be viewed as a slight generalisation of the one discussed in \cite{Festuccia:2018rew}.} (see \cite{Festuccia:2016gul} for more on this). 
            Moreover, we can rewrite the locus equations \eqref{eq--3.1.loceqns} in terms of fields pulled back from 4d via $\begin{tikzcd}[column sep=small]
                S^1\ar[r,hook] & X\ar[r,"p"] & B
            \end{tikzcd}$ using
            \begin{align}
                A&=p^\ast A_4+\varphi\, b\label{eq--3.3.A5d4d}\\
                F&=p^\ast F_4+\DD\varphi\wedge b+\varphi\,\dd b,\label{eq--3.3.F5d4d}
            \end{align} 
            where $b=g(\X/\|\X\|^2,\us)$ and the covariant derivative $\DD\varphi$ is with respect to $p^\ast A_4$ (cf. \cite{Festuccia:2019akm}, section 4). In particular, $\iota_{\SR} F=0$ produces two equations:
            \begin{equation}\label{eq--3.3.locus}
                \iota_v F_4-\DD(\iota_{\SR} b\,\varphi)=0,\qquad \iota_v\DD\varphi=0,
            \end{equation}  
            where $v=p_\ast{\R}$ is Killing with respect to the 4d metric and represents the remaining torus action. The two 4d scalars $\sigma,\varphi$ are simultaneously diagonalisable, breaking the gauge group to its Cartan subgroup $U(1)^{\rk G}$ as usual. The first equation in \eqref{eq--3.3.locus} then has solutions corresponding to line bundles\footnote{They are equivariant under the remaining torus action on $B$ represented by $v$. $F$ provides a symplectic form and $\iota_{\SR} b\,\varphi$ a moment map for the action by $v$.} on $B$ characterised by their first Chern class $c_1$. 
            
            Explicitly, the flat connections in the 5d locus on $X$ have non-trivial holonomy along the generator $[\gamma]\in\pi_1(X)$, being the loop along the $\X$-direction. Using \eqref{eq--3.3.A5d4d}, we have constant $\varphi=\mathfrak{m}$ on the locus. But since $A$ is flat, using \eqref{eq--3.3.F5d4d} yields
            \begin{equation}\label{eq--3.3.flux}
                p^\ast F_4+\varphi\dd b=0.
            \end{equation}
            Note that, while $\dd b$ is basic with respect to $\X$, it is not an exact form on $B$ and hence, the field strength $F_4$ satisfying \eqref{eq--3.3.flux} carries flux, determined by $\varphi$. In this way, flat connections in the locus on $X$ yield $F_4$ with flux in the locus of the 4d theory on $B$.

            More geometrically, remember that the flat connections on $X$ are characterised by $c_1$ with $\im c_1\simeq\mathbb{Z}_h\subset H^2(X)$. Note that $H^2(X)$ is generated by the (Poincaré duals of the) facets of the moment map polytope\footnote{To be precise, the generators are the torus fibration given by the moment map, restricted such that its image is an edge of the polytope.} $\mu(M)$, modulo some relations and the $\mathbb{Z}_h$-subgroup is generated by a particular linear combination. Consequently, quotienting along the free $S^1$ by taking $h\to\infty$ yields the generating set for $H^2(B)$; the generator for the $\mathbb{Z}_h$-subgroup yields a generator $[c]$ for a $\mathbb{Z}$-subgroup in $H^2(B)$.

            In terms of the projection condition \eqref{eq--3.18.projcond}, for $h\to\infty$ only a single mode 
            \begin{equation}\label{eq--3.3.proj}
                t=\alpha(\mathfrak{m})\in\mathbb{Z} 
            \end{equation}
            is allowed along the free $S^1$, i.e. a single slice $\mathcal{C}_t$ of the moment map cone $\mathcal{C}$ (cf. \autoref{fig--3.2.slices}); for larger values of $h$ the additional slices move further up and hide at infinity in the limit. Therefore, the sum over flat connections labelled by $\mathfrak{m}$ in \eqref{eq--3.2.flat} for the 5d partition function on $X$, in the limit, becomes a sum over fluxes on the two-cycle $c$,
            \begin{equation}
                \frac{1}{2\pi}\int_c F_4=\mathfrak{m}=\diag(m_1,\dots,m_{\rk G})\in\mathbb{Z}^{\rk G\times \rk G},
            \end{equation}
            for the 4d partition function on $B$.

        \paragraph{Instantons.}
            Another interesting feature of the 4d theory concerns instanton contributions. It was shown in \cite{Festuccia:2019akm} that the space of horizontal (with respect to $\R$), self-dual two-forms $\Omega^{2+}_H(M;\mathfrak{g}_P)$ is isomorphic to another three-dimensional subbundle of $\Omega^2_H(M;\mathfrak{g})$ which is transverse to $\X$ and given by the image of the following projector:
            \begin{equation}
                P=\frac{1}{1+g(\hat{\X},\R)^2}(1+g(\hat{\X},\R)\star_4-g_4(v,\us)\wedge\iota_v),
            \end{equation} 
            where $g$ and $g_4$ are the metrics on $M$ and $B$ and $\hat{\X}=\X/\|\X\|$.
            Note that, at the vertices of the polytope $\mu(M)$, we must have $g(\hat{\X},\R)=\pm1$ (cf. discussion succeeding \eqref{eq--2.1.free}) and $v=0$ (since the vertices are precisely the fixed points with respect to the remaining torus action). Then $P$ coincides with $P^\pm$; hence, on $B$ this happens precisely at the torus fixed points. Therefore, depending on our choice of free direction $\X$, the 4d theory localises to self-dual (SD) or anti-self-dual (ASD) connections at the fixed points and gives either instanton or anti-instanton contributions. This is a generalisation of Pestun's theory on $S^4$ and is known as \textit{Pestunization} \cite{Festuccia:2018rew}. For the examples in \autoref{sec--5} different choices of $\X$ are allowed and, consequently, we obtain different SD/ASD contributions at the 4d torus fixed points. The choice for which we have SD (or ASD) contributions at all fixed points is usually called the (\textit{anti})-\textit{topological} theory, while we call theories with mixed distributions \textit{exotic}. 

            It should be mentioned at this point that, in general, we cannot reach 4d theories with arbitrary SD/ASD distributions from 5d because there might not always be enough toric Sasakian $S^1$-fibrations over $B$ to start with (e.g., there is no such fibration for Pestun's theory on $S^4$). For such cases, the intrinsically 4d formalism in \cite{Festuccia:2018rew} has to be used. 

        \paragraph{4d Geometries.} 
            Given the reduction procedure described above for a large class of five-dimensional manifolds $M$, it is natural to ask about a classification of the possible 4d base manifolds $B$. Indeed, such a classification was provided in \cite{Festuccia:2016gul} for the special case of toric Sasaki-Einstein manifolds whose spin structure can be pushed down to $B$. Here, it was possible to write the normals of the moment map cone $\mathcal{C}$ and the free direction $\X$ in a standard form and, by computing the intersection forms of elements in $H_2(B)$, it was proved that $B$ is homeomorphic to connected sums $(S^2\times S^2)^{\#k}$, $k=m/2-1$ ($m$ the number of vertices of the polytope, which in this case is always even and $m\ge4$). 
            
            Since we do not have to impose a spin structure on $B$ (by virtue of the cohomological formulation) and the moment map cone is not required to be Gorenstein, we should be able to reach all quasi-toric 4d manifolds
            \begin{equation}\label{eq--3.3.quasitoric}
                (S^2\times S^2)^{\#a}\#(\mathbb{CP}^2)^{\#b}\#(\overline{\mathbb{CP}^2})^{\#c},\qquad a,b,c\in\mathbb{Z}_{\geq 0}.
            \end{equation}  
            However, we are not currently able to provide a proof of this statement.

\section{Partition Functions}\label{sec--4}

    In this section, we derive our main result: the one-loop partition function around fluxes of an $\mathcal{N}=2$ vector multiplet on a large class of four-dimensional manifolds and for all distributions of SD/ASD complexes at the fixed points that can be obtained from 5d. As a starting point, we take the partition function for the 5d $\mathcal{N}=1$ vector multiplet on the simply-connected manifold $M$. A detailed exposition of this result can be found in \cite{Qiu:2016dyj}. Our focus will be on the perturbative partition function $Z_M$ which involves a product over charges $\Vec{n}=(n_1,n_2,n_3)$ of the various fields under the $T^3$-action. These charges are determined by the integer-valued vectors inside the three-dimensional moment map cone $\mathcal{C}$ of $C(M)$ (cf. \eqref{eq--2.1.dualcone}). We can build up $\mathcal{C}\cap\mathbb{Z}^3$ as a collection of 2d slices (cf. \autoref{fig--3.2.slices}) labelled by the charge $t$ under the free $S^1$ instead and replace the product over $\vec{n}$ by $(n_1,n_2,t)$. Then, at fixed $t$, the perturbative partition function receives contributions from the two-dimensional slice $\mathcal{C}_t\subset\mathcal{C}$. The orientation of $\mathcal{C}_t$ inside $\mathcal{C}$ depends on the choice of free $S^1$.
    
    When moving to the quotient $X=M/\mathbb{Z}_h$ the rewriting of $Z_M$ in terms of a product over $(n_1,n_2,t)$ makes it easy to implement the projection condition \eqref{eq--3.18.projcond}. In particular, only slices $...,\mathcal{C}_{\alpha(\mathfrak{m})-h},\mathcal{C}_{\alpha(\mathfrak{m})},\mathcal{C}_{\alpha(\mathfrak{m})+h},...$ contribute to a single topological sector $\mathfrak{m}$. The partition function on $X$ then involves a sum over all these sectors.
    
    In the limit where $h\to\infty$, we reduce to the four-dimensional base $B=M/S^1$ and the flat connections on $X$ give rise to configurations with flux on $B$. Correspondingly, we arrive at the zero-instanton one-loop partition function on $B$ which, at a given flux sector, is a product over charges $(n_1,n_2)$ under the remaining $T^2$-action. These charges belong to the projection of $\mathcal{C}_{\alpha(\mathfrak{m})}$ to its first two components. 
    
    We point out that all expressions for $Z$ below are unfactorised and we leave for \autoref{sec--6} a discussion of their factorisation properties.

    \subsection{Simply-Connected Case}

        The cohomological complex for an $\mathcal{N}=1$ vector multiplet on a simply-connected toric Sasakian manifold $M$ with gauge group $G$ has been introduced in \eqref{eq--3.1.complex} and the action in \eqref{eq--3.1.YMaction}. The partition function is given by \cite{Qiu:2016dyj}
        \begin{equation}\begin{split}\label{eq.fullpartitionTS}
            \mathcal{Z}_M&=\int_{\mathfrak{h}}d a\;e^{-S_{\text{cl}}}\cdot Z_M\cdot Z_M^{\text{inst}}\\
            &=\int_{\mathfrak{h}}d a\;e^{-\frac{(2\pi )^3}{g_\text{YM}^2}\varrho\tr ( a^2)}\cdot\mbox{det}_\text{adj}'S_3^{\mathcal{C}}( \ii a |\R)\cdot\prod_{i=1}^m Z^{\text{Nek}}_{\mathbb{C}^2_{\epsilon^i_1,\epsilon^i_2}\times S^1}(a|\epsilon^i_1,\epsilon^i_2,\beta_i^{-1}).
        \end{split}\end{equation}
        The integral with respect to the Coulomb branch parameter $ a$ is over the Cartan subalgebra $\mathfrak{h}\subseteq\mathfrak{g}$ of the gauge group. Contact instantons contribute to the non-perturbative part of the partition function and it is conjectured in \cite{Qiu:2016dyj} that $Z^{\text{inst}}_M$ is obtained by gluing Nekrasov partition functions at each of the $m$ fixed fibres. In the expression above, $\epsilon^i_1,\epsilon^i_2,\beta_i^{-1}$ are the local equivariance parameters for the $T^3$-action on each neighbourhood $\mathbb{C}^2_{\epsilon^i_1,\epsilon^i_2}\times S^1$ around the fixed fibres; in particular, $\beta_i$ is the radius of the fixed fibre.
        
        We focus on the perturbative partition function\footnote{Most results on the perturbative partition function in the literature assume that $M$ is Einstein or, equivalently, that the cone $C(M)$ is Calabi-Yau. However, even without this condition, $Z_M$ is still given by the triple sine function \cite{Qiu:2016dyj}.}, written in terms of the triple sine function $S_3^\mathcal{C}(\ii\alpha(a)|\R)$ (see, e.g., \cite{Kurokawa2003MultipleSF,narukawa2004modular}) which explicitly reads
        \begin{equation}\label{eq.perturbativeTS}
            Z_M=\prod_{\alpha\in\Delta}\prod_{\Vec{n}\in \mathcal{C}\cap\mathbb{Z}^3}(\ev{\vec{n},\vec{\R}}+ \ii\alpha(a) )\prod_{\Vec{n}\in \mathring{\mathcal{C}}\cap\mathbb{Z}^3}(\ev{\vec{n},\vec{\R}}- \ii\alpha(a) ).
        \end{equation}
        Here, $(n_1,n_2,n_3)$ are the charges of the modes under rotations along the $T^3$-action and $\vec\R$ the Reeb, as usual. The moment map cone $\mathcal{C}$ of $C(M)$ and its interior $\mathring{\mathcal{C}}$ have been introduced in \eqref{eq--2.1.dualcone}. Finally, $\Delta$ denotes the root set of the gauge algebra $\mathfrak{g}$. Instead of using the $\det_\text{adj}^\prime$-notation from \eqref{eq.perturbativeTS} we henceforth write the one-loop part of the partition function explicitly as a product over the roots, omitting possible factors arising from fermionic zero-modes that would cancel a Vandermonde determinant in the integral over $a$. This way, shifts of $a$ via flat connections will be more apparent in the notation. Roughly speaking, the products in \eqref{eq.perturbativeTS} count holomorphic functions on $C(M)$, weighted by their eigenvalue of $\mathcal{L}_\R$ (the Lie-derivative with respect to the Reeb); see \cite{Qiu:2016dyj} for a derivation.

        Before we proceed to non-simply-connected spaces, we have to point out that the partition function \eqref{eq.perturbativeTS} is not the full story yet. Indeed, whenever $H^2(M)\neq0$ there might be an additional sum over flux configurations in the path integral. The fact that these configurations are not part of the localisation locus \eqref{eq--3.1.loceqns} points to the fact that more general loci should be allowed in this case. These are possibly complex or singular field configurations (see, e.g. \cite{Benini:2015noa,Closset:2015rna} in 2 and 3d). Such an analysis is outside the scope of this work, where we consider \eqref{eq.perturbativeTS} as our starting point. However, we expect our analysis to go through in the exact same way in the presence of these additional fluxes.

    \subsection{Finite Quotients and Reduction}\label{subsec.generalformula}
    
        As stated above, we consider toric Sasakian manifolds $M$ admitting a free $S^1$-action. In terms of a basis $\{e_i\}_{i=1,2,3}\subset\mathfrak{t}_3$ for an effective $T^3$-action, the vector field generating rotations along the free $S^1$ is
        \begin{equation}
            \Vec{\X}=l_1e_1+l_2e_2+l_3e_3,
        \end{equation}
        where the $\mathbb{Z}$-valued coefficients $l_1,l_2,l_3$ are obtained by solving \eqref{eq--2.1.free}; they depend on the geometry of $\mathcal{C}$ and will be discussed for concrete examples in \autoref{sec--5}. 
        
        \paragraph{Slicing the Cone.}
            The first step towards deriving the one-loop partition function around fluxes on $B=M/S^1$ consists in rewriting \eqref{eq.perturbativeTS} as a product over the slices $\mathcal{C}_t$ in \eqref{eq--3.2.slices}. Here, $t$ is the integer-valued charge under the free $S^1$ and can be written in terms of the charges $\{n_i\}$ corresponding to the initial choice of basis $\{e_i\}$ as follows:
            \begin{equation}\label{eq.tgeneric}
                t=l_1n_1+l_2n_2+l_3n_3.
            \end{equation}
            Thus, substituting\footnote{We assume $l_3\neq0$ here. Substituting for any of $n_1,n_2,n_3$ leads to the same result.} $n_3=l_3^{-1}(t-l_1n_1-l_2n_2)$ in \eqref{eq.perturbativeTS} and defining
            \begin{equation}\label{eq.epsilongeneric}
                \epsilon_1\equiv \R_1-\frac{l_1}{l_3}\R_3,\quad \epsilon_2\equiv \R_2-\frac{l_2}{l_3}\R_3,
            \end{equation}
            we find
            \begin{equation}\label{eq.tperturbativeTS1}
                Z_M= \prod_{\alpha\in\Delta}\prod_
                t\prod_{(n_1,n_2)\in\mathcal{B}_t}(n_1\epsilon_1+n_2\epsilon_2+t\tfrac{\R_3}{l_3}+ \ii\alpha(a) )\prod_{(n_1,n_2)\in\mathring{\mathcal{B}}_t}(n_1\epsilon_1+n_2\epsilon_2+t\tfrac{\R_3}{l_3}- \ii\alpha(a) ).
            \end{equation}
            Here, the region $\mathcal{B}_t$ is defined as
            \begin{equation}\label{eq.2dslice}
                \mathcal{B}_t=(\proj_{12}\mathcal{C}_t)\cap\mathbb{Z}^2=\proj_{12}\{\vec{v}\in\mathcal{C}\;|\;\ev{\vec{v},\vec{\X}}=t\}\cap\mathbb{Z}^2,
            \end{equation}
            where $\proj_{12}[v_1,v_2,v_3]=[v_1,v_2]$ projects onto the first two components and $\mathring{\mathcal{B}}_t$ denotes the interior of $\mathcal{B}_t$.
            In order to write \eqref{eq.tperturbativeTS1} more compactly, we define a slight generalisation of $\Upsilon$-functions (see, e.g., \cite{Pestun:2007rz,Hama:2012bg,Festuccia:2018rew}):
            \begin{equation}\label{eq.upsilon}
                \Upsilon^{\mathcal{B}_t}(z|\epsilon_1,\epsilon_2)=\prod_{(n_1,n_2)\in\mathcal{B}_t}(\epsilon_1 n_1+\epsilon_2 n_2+z)\prod_{(n_1,n_2)\in\mathring{\mathcal{B}}_t}(\epsilon_1 n_1+\epsilon_2 n_2+\Bar{z}),
            \end{equation}
            Using this definition, we can rewrite \eqref{eq.tperturbativeTS1} as
            \begin{equation}\label{eq.tperturbativeTS2}
                Z_M= \prod_{\alpha\in\Delta}\prod_t\Upsilon^{\mathcal{B}_t}( \ii\alpha(a) +\tfrac{\R_3}{l_3}t|\epsilon_1,\epsilon_2).
            \end{equation}

            As discussed above, we can also deform the Reeb by introducing a squashing of $M$. Let us recall that $M$ is obtained via symplectic reduction \eqref{eq--2.1.reduction} of $\mathbb{C}^m$, and that the $\R_i$  ($i=1,2,3$) can be expressed in terms of equivariance parameters $\omega_j$, ($j=1,...,m)$. These, in turn, can be squashed without spoiling the supersymmetric theory (cf. discussion in \autoref{sec--3.1}, Cohomological Complex). Hence, we impose that the squashing along the fibre vanishes, so that $\vec\X\cdot\vec\R$ stays invariant under deformations of the $\R_i$. We will use this constraint in the examples below to express $\R_3$ in terms of the equivariance parameters $\epsilon_1,\epsilon_2$.
        
        \paragraph{Taking the Quotient.}
            The next step in the construction is to pass to the quotient $X=M/\mathbb{Z}_h$ introduced in \autoref{sec--3.2}. The action of the free $S^1$ on $M$ descends to $X$ if the projection condition
            \begin{equation}
                t=\alpha(\mathfrak{m})\mmod h
            \end{equation}
            is satisfied, which is the restriction we need to impose on the modes considered in the superdeterminant. Thanks to our rewriting of $Z_M$ as \eqref{eq.tperturbativeTS2} this is achieved easily for fixed $\mathfrak{m}$:
            \begin{equation}\label{eq.perturbativeX}
                Z_X= \prod_{\alpha\in\Delta}\prod_{t=\alpha(\mathfrak{m})\mmod h}\Upsilon^{\mathcal{B}_t}( \ii\alpha(a) +\tfrac{\R_3}{l_3}t|\epsilon_1,\epsilon_2).
            \end{equation}
            This contribution is the one-loop superdeterminant around flat connections\footnote{The reason why no additional contributions appear alongside the product in \eqref{eq.perturbativeX}, according to \cite{Qiu:2016dyj}, is that $\pi_1(X)$ is pure torsion.}. The full partition function on $X$ is then a sum over the topological sectors:
            \begin{equation}
                \mathcal{Z}_X=\sum_{\mathfrak{m}}\int_{\mathfrak{h}}d a\;e^{-S_\text{cl}}\cdot Z_X\cdot Z_X^\text{inst}.
            \end{equation}
            The classical piece of the partition function is given in \eqref{eq--3.20.quotclassical}.
            We see that the slices $\mathcal{B}_t$ contributing to $Z_M$ now distribute over different topological sectors\footnote{We expect a similar treatment to be possible for the contributions from contact instantons.} according to their value of $t$. 
        
        \paragraph{Reduction to Base.}
            In order to compute the one-loop partition function around fluxes on the four-dimensional base manifold $B$, we take the limit of $h\to\infty$ as discussed in \autoref{sec--3.3}. For fixed $\mathfrak{m}$, only a single mode \eqref{eq--3.3.proj} survives and the one-loop contributions become
            \begin{equation}\label{eq.generalformula}
                Z_B= \prod_{\alpha\in\Delta}\Upsilon^{\mathcal{B}_\mathfrak{m}}( \ii\alpha(a) +\tfrac{\R_3}{l_3}\alpha(\mathfrak{m})|\epsilon_1,\epsilon_2),
            \end{equation}
            where, using the fact that $t$ only takes the single value $t=\alpha(\mathfrak{m})$, we write $\mathcal{B}_\mathfrak{m}\equiv\mathcal{B}_{t}$ here. Clearly, the infinite products in the $\Upsilon$-functions need to be regulated appropriately, for instance, using zeta function regularisation.
            The full partition function on $B$ is a sum over flux sectors:
            \begin{equation}\label{eq.fullpartitionB}
                \mathcal{Z}_B=\sum_{\mathfrak{m}}\int_{\mathfrak{h}}d a\;e^{-S_\text{cl}}\cdot Z_B\cdot Z_B^\text{inst}.
            \end{equation}
            The classical and instanton parts are determined in \autoref{sec--6}.

            Let us stress again that the sum over fluxes in \eqref{eq.fullpartitionB} might be incomplete, since our starting point was \eqref{eq.perturbativeTS}. If $H^2(M)\neq0$, there might be additional flux configurations that should already be included at the level of $\mathcal{Z}_M$. 

            We also point out that for the topological theory we sum over positive flux only (due to $t\in\mathbb{Z}_{\ge0}$) and have a product over the full root set $\Delta$ in $Z_B$. In the literature, one might instead find a sum over integer flux and a product over positive roots only. These two presentations depend on whether $Z_M$ is written as a product over all roots or rewritten in terms of positive ones; the resulting reduction $Z_B$ should be independent of this choice.

\section{Examples}\label{sec--5}
    
    In this section, we present concrete examples of the reduction procedure explained above. First, we consider the simplest toric Sasakian manifold, $S^5$, which is also Einstein and regular. Here, one of the two free directions we find is the Reeb itself, which yields the topologically twisted theory on $\mathbb{CP}^2$. The other direction results in an exotic theory. The second example, $Y^{p,q}$, is Einstein but not regular and reduces to $B$ homeomorphic to $S^2\times S^2$. We find again two free directions (one ``top'' and one ``ex''), but none of the two coincides with the Reeb. Finally, we present the manifolds $A^{p,q}$, which are neither Einstein nor regular and reduce to $(\overline{\mathbb{CP}^2})^{\#2}$. Here, we only find an ``exotic'' direction.

    \subsection{$S^5$}\label{sec--5.1}

        This example has been discussed in detail in \cite{Lundin:2021zeb} and we include it here for completeness. The metric cone is $C(S^5)=\mathbb{C}^3$. Moreover, as stated above, $S^5$ is a regular Sasaki-Einstein manifold and thus, the Reeb vector generates a free $S^1$-action which can be used to dimensionally reduce to $S^5/S^1=\mathbb{CP}^2$. In general, reducing along the Reeb results in a topologically twisted theory. This can be understood from the discussion following \eqref{eq--2.1.free}. Additionally, we find a second fibre which yields an exotic theory.
        
        Choosing the standard basis for an effectively acting $T^3\subset\mathbb{C}^3$, the edge vectors of the three-faceted cone $\mathcal{C}$ are:
        \begin{equation}
            \vec{u}_1=[0,1,0],\quad \vec{u}_2=[0,0,1],\quad \vec{u}_3=[1,0,0]
        \end{equation}    
        and the inward-pointing normals are given by:
        \begin{equation}
            \vec{v}_1=[1,0,0],\quad \vec{v}_2=[0,1,0],\quad \vec{v}_3=[0,0,1].
        \end{equation}
        Hence, $\mathcal{C}$ is simply the first octant, see \autoref{fig.S5cones}.
        The Reeb vector field can be written as follows:
        \begin{equation}
            \R=\omega_1\Vec{v}_1+\omega_2\Vec{v}_2+\omega_3\Vec{v}_3,
        \end{equation}
        where, for $S^5$, $\R_i=\omega_i\equiv 1+a_i\in\mathbb{R}$ are the equivariance parameters deformed by some parameters $a_i$. The integer-valued vectors $\Vec{n}$ in the cone $\mathcal{C}$ (see \eqref{eq--2.1.dualcone}) are found solving $\ev{\Vec{n},\Vec{v}_i}\geq 0$. They are given by:
        \begin{equation}\label{eq.dualconeS5}
            \Vec{n}\in\mathcal{C}\cap\mathbb{Z}^3=\mathbb{Z}_{\geq 0}^3.
        \end{equation}       
        Solving \eqref{eq--2.1.free}, we find two inequivalent free $S^1$-directions:
        \begin{equation}\begin{split}\label{eq.S5fibres}
            \mbox{top:}\;&\Vec{\X}^\text{top}=\Vec{v}_1+\Vec{v}_2+\Vec{v}_3=[1,1,1]\sim\R,\\
            \mbox{ex:}\;&\vec{\X}^\text{ex}=\Vec{v}_1+\Vec{v}_2-\Vec{v}_3=[1,1,-1].
        \end{split}\end{equation}
        These choices result in +++ and \tm++ distributions of SD/ASD complexes. Correspondingly, the charges of the modes under the $S^1$-rotation along the fibre are:
        \begin{equation}\begin{split}\label{eq.tS5}
            \mbox{top:}\;&t^\text{top}=n_1+n_2+n_3,\\
            \mbox{ex:}\;&t^\text{ex}=n_1+n_2-n_3,
        \end{split}\end{equation}
        where $t^\text{top}\geq 0$ while $t^\text{ex}$ ranges from $-\infty$ to $+\infty$. Notice that, since the supercharge \eqref{eq--3.1.Qsquare} squares to a translation along $\R$, reducing along the Reeb, in the undeformed case, only gives ordinary Donaldson-Witten theory with $Q^2=0$. The deformation of the Reeb used for dimensional reduction, which makes it tilt away from the fibre, is introduced here in order to obtain an equivariant four-dimensional theory also in the case of reducing along $\vec{\X}^\text{top}$.
        
        It is crucial to set the $\omega_i$ so that the deformation acts only on the base manifold while the fibre is invariant:
        \begin{equation}\begin{split}\label{eq.S5basesquashing}
            \mbox{top:} &\quad+a_1+a_2+a_3=0,\\
            \mbox{ex:} &\quad+a_1+a_2-a_3=0,
        \end{split}\end{equation}
        Then, we redefine\footnote{Notice that the equivariance parameters $\epsilon^\text{top}_{1,2}$ vanish in the undeformed limit while this is not the case for $\epsilon^\text{ex}_{1,2}$. This again shows that reducing along the Reeb results in the non-equivariant limit of Donaldson-Witten theory.}:
        \begin{equation}\label{eq:epsilon5d}\begin{array}{rll}
            \mbox{top:} &\quad\epsilon_1^\text{top}=\omega_1-\omega_3,\quad & \epsilon_2^\text{top}=\omega_2-\omega_3,\\
            \mbox{ex:} &\quad\epsilon_1^\text{ex}=\omega_1+\omega_3, &  \epsilon_2^\text{ex}=\omega_2+\omega_3.
        \end{array}\end{equation}
        Imposing \eqref{eq.S5basesquashing}, we find:
        \begin{equation}\begin{split}
            \mbox{top:}&\quad\omega_3^\text{top}=+1-\frac{\epsilon^\text{top}_1+\epsilon_2^\text{top}}{3},\\
            \mbox{ex:}&\quad\omega_3^\text{ex}=-\frac{1}{3}+\frac{\epsilon^\text{ex}_1+\epsilon_2^\text{ex}}{3}.
        \end{split}\end{equation}
        When confusion cannot arise, we henceforth drop the superscripts ``ex'' and ``top''. Finally, we substitute $n_3=\pm(t-n_1-n_1)$ in \eqref{eq.tperturbativeTS1} and obtain:
        \begin{equation}\label{eq.unfactorised.5d.asd}
            Z_{S^5}^\text{top}= \prod_{\alpha\in\Delta}\prod_{t\geq 0}\prod_{(n_1,n_2) \in\mathcal{B}_t}\bigg(\epsilon_1 n_1+\epsilon_2 n_2+ \ii\alpha(a) +\omega_3 t\bigg)\prod_{(n_1,n_2) \in\mathring{\mathcal{B}}_t}\bigg(\epsilon_1 n_1+\epsilon_2 n_2+ \ii\alpha(a) +\omega_3 t\bigg),
        \end{equation}
        \begin{equation}\label{eq.unfactorised.5d.flip}
            Z_{S^5}^\text{ex}= \prod_{\alpha\in\Delta}\prod_{t\in\mathbb{Z}}\prod_{(n_1,n_2) \in\tilde{\mathcal{B}}_t}\bigg(\epsilon_1 n_1+\epsilon_2 n_2+ \ii\alpha(a) -\omega_3 t\bigg)\prod_{(n_1,n_2) \in\tilde{\mathcal{B}}^\circ_t}\bigg(\epsilon_1 n_1+\epsilon_2 n_2+ \ii\alpha(a) -\omega_3 t\bigg).
        \end{equation}
        The slices \eqref{eq.2dslice} of the dual cone $\mathcal{C}$ are given by:
        \begin{equation}\begin{split}
            \mbox{top:} &\quad\mathcal{B}_t=\proj_{12}\{\vec{v}\in\mathcal{C}\;|\;\ev{\vec{v},\vec{\X}^\text{top}}=t^\text{top}\}\cap\mathbb{Z}^2,\\
            \mbox{ex:} &\quad
            \tilde{\mathcal{B}}_t=\proj_{12}\{\vec{v}\in\mathcal{C}\;|\;\ev{\vec{v},\vec{\X}^\text{ex}}=t^\text{ex}\}\cap\mathbb{Z}^2.
        \end{split}\end{equation}
        Using the definition of $\Upsilon^{\mathcal{B}_t}$-functions in \eqref{eq.upsilon} we can express \eqref{eq.unfactorised.5d.asd}-\eqref{eq.unfactorised.5d.flip} as:
        \begin{equation}\label{eq.ZtopS5}
            Z_{S^5}^\text{top}= \prod_{\alpha\in\Delta}\prod_{t\geq 0}\Upsilon^{\mathcal{B}_t}( \ii\alpha(a) +(1-\tfrac{\epsilon_1+\epsilon_2}{3})t|\epsilon_1,\epsilon_2),
        \end{equation}
        \begin{equation}\label{eq.ZexS5}
            Z_{S^5}^\text{ex}= \prod_{\alpha\in\Delta}\prod_{t\in\mathbb{Z}}\Upsilon^{\tilde{\mathcal{B}}_t}( \ii\alpha(a) +(\tfrac{1}{3}-\tfrac{\epsilon_1+\epsilon_2}{3})t|\epsilon_1,\epsilon_2).
        \end{equation}
        The two possible slicings $\mathcal{B}_t,\Tilde{\mathcal{B}}_t$ of $\mathcal{C}$ are depicted in \autoref{fig.S5cones}. 
        \begin{figure}[h!]
            \centering
            \tdplotsetmaincoords{80}{45}
            \begin{tikzpicture}[scale=0.75,tdplot_main_coords]
            \draw[thick,-stealth] (0,0,0) -- (5,0,0) node[anchor=north]{$\Vec{u}_3$};
            \draw[thick,-stealth] (0,0,0) -- (0,5,0) node[anchor=south]{$\Vec{u}_1$};
            \draw[thick,-stealth] (0,0,0) -- (0,0,5) node[anchor=south]{$\Vec{u}_2$};
            \draw[thick,-stealth] (0,0,0) -- (2.5,2.5,2.5) node[anchor=south]{$\Vec{\X}^\text{top}$};
            \filldraw[draw=gray,fill=gray!20,opacity=0.3]     
            (0,0,0)
            -- (5,0,0)
            -- (0,5,0)
            -- cycle;
            \filldraw[draw=gray,fill=gray!20,opacity=0.3]     
            (0,0,0)
            -- (0,5,0)
            -- (0,0,5)
            -- cycle;
            \filldraw[draw=gray,fill=gray!20,opacity=0.3]     
            (0,0,0)
            -- (5,0,0)
            -- (0,0,5)
            -- cycle;
            \filldraw[thick,draw=blue,fill=blue!20,opacity=0.5]       
            (2,0,0)
            -- (0,2,0)
            -- (0,0,2)
            -- cycle;
            \draw[thick,blue]    
            (2,0,0)
            -- (0,2,0)
            -- (0,0,2)
            -- cycle;
            \filldraw[thick,draw=red,fill=red!20,opacity=0.5]       
            (4,0,0)
            -- (0,4,0)
            -- (0,0,4)
            -- cycle;
            \draw[thick,red]        
            (4,0,0)
            -- (0,4,0)
            -- (0,0,4)
            -- cycle;
            \end{tikzpicture}
            \hspace{6em}
            \tdplotsetmaincoords{80}{45}
            \begin{tikzpicture}[scale=0.75,tdplot_main_coords]
            \draw[thick,-stealth] (0,0,0) -- (5,0,0) node[anchor=north]{$\Vec{u}_3$};
            \draw[thick,-stealth] (0,0,0) -- (0,5,0) node[anchor=south]{$\Vec{u}_1$};
            \draw[thick,-stealth] (0,0,0) -- (0,0,5) node[anchor=south]{$\Vec{u}_2$};
            \draw[thick,-stealth] (0,0,0) -- (1,1,-1) node[anchor=west]{$\Vec{\X}^\text{ex}$}; 
            \filldraw[draw=gray,fill=gray!20,opacity=0.3]     
            (0,0,0)
            -- (5,0,0)
            -- (0,5,0)
            -- cycle;
            \filldraw[draw=gray,fill=gray!20,opacity=0.3]     
            (0,0,0)
            -- (0,5,0)
            -- (0,0,5)
            -- cycle;
            \filldraw[draw=gray,fill=gray!20,opacity=0.3]     
            (0,0,0)
            -- (5,0,0)
            -- (0,0,5)
            -- cycle;
            \filldraw[draw=green,fill=green!20,opacity=0.5]       
            (0,0,0)
            -- (0,4,4)
            -- (4,0,4)
            -- cycle;
            \draw[thick,green] (0,4,4) -- (0,0,0) -- (4,0,4);
            \filldraw[draw=red,fill=red!20,opacity=0.5]       
            (0,0,2)
            -- (0,4,6)
            -- (4,0,6)
            -- cycle;
            \draw[thick,red] (0,4,6) -- (0,0,2) -- (4,0,6);
            \filldraw[draw=blue,fill=blue!20,opacity=0.5]       
            (2,0,0)
            -- (0,2,0)
            -- (0,4,2)
            -- (4,0,2)
            -- cycle;
            \draw[thick,blue] (4,0,2) -- (2,0,0) -- (0,2,0) -- (0,4,2);
            \end{tikzpicture}
            \caption{Cone $\mathcal{C}$ of $C(S^5)$. Left side: sliced along $\vec{\X}^\text{top}$ for $t=2$ (blue) and $t=4$ (red). At $t=0$ the slice only contains the origin. Right side: sliced along $\vec{\X}^\text{ex}$ for $t=2$ (blue), $t=0$ (green) and $t=-2$ (red). The slices are compact for $\vec{\X}^\text{top}$ and non-compact for $\vec{\X}^\text{ex}$. (In a slight abuse, we depict $\mathcal{C}$ and $\vec\X$ in the same ambient space.)}
            \label{fig.S5cones}
        \end{figure}
        
        Up to this point, \eqref{eq.ZtopS5} and \eqref{eq.ZexS5} are just rewritings of the perturbative partition function on $S^5$.
        The difference between the two cases arises when we introduce a quotient acting along either of the two fibres. The resulting manifold is a higher-dimensional generalisation of lens spaces $X=S^5/\mathbb{Z}_h$ and the charge of the modes under rotations along the fibre is constrained by the projection condition \eqref{eq--3.18.projcond}:
        \begin{equation}
            t=\alpha(\mathfrak{m})\mmod h.
        \end{equation}
        Thus, the one-loop partition function around flat connections is a sum over inequivalent topological sectors and, at each of them, only slices $\mathcal{B}_t,\Tilde{\mathcal{B}}_t$ satisfying the projection condition contribute.
                        
        When reducing to the base manifold $\mathbb{CP}^2$ by taking the large $h$ limit, we set $t=\alpha(\mathfrak{m})$ and find, for each flux sector:
        \begin{equation}\label{eq--5.1.cp2}
            Z_{\mathbb{CP}^2}^\text{top}= \prod_{\alpha\in\Delta}\Upsilon^{\mathcal{B}_\mathfrak{m}}( \ii\alpha(a) +(1-\tfrac{\epsilon_1+\epsilon_2}{3})\alpha(\mathfrak{m})|\epsilon_1,\epsilon_2).
        \end{equation}
        \begin{equation}
            Z_{\mathbb{CP}^2}^\text{ex}= \prod_{\alpha\in\Delta}\Upsilon^{\tilde{\mathcal{B}}_\mathfrak{m}}( \ii\alpha(a) +(\tfrac{1}{3}-\tfrac{\epsilon_1+\epsilon_2}{3})\alpha(\mathfrak{m})|\epsilon_1,\epsilon_2).
        \end{equation}
        The first expression corresponds to a +++ distribution of complexes at all three fixed points of $\mathbb{CP}^2$, and thus to an equivariant topological twisting, while, for the exotic theory, one fixed point flips to ASD and the distribution of complexes is \tm++ instead.
        
        At each flux sector on $\mathbb{CP}^2$, the charges of the modes contributing $\mathcal{B}_\mathfrak{m},\Tilde{\mathcal{B}}_\mathfrak{m}$ can be represented by projecting the slices in \autoref{fig.S5cones} to the $(n_1,n_2)$-plane. Explicitly, the integer valued vectors in the cone $\Vec{n}\in\mathcal{C}$ are determined solving $\ev{\Vec{v}_i,\Vec{n}}\geq 0$. Substituting $n_3=t^\text{top}-n_1-n_2$, the slice $\mathcal{B}_\mathfrak{m}$ is determined by:
        \begin{equation}
            n_1\geq 0,\qquad n_2\geq 0,\qquad n_1+n_2\leq t^\text{top}.
        \end{equation}
        Similarly, substituting $n_3=-t^\text{ex}+n_1+n_2$, the slice $\tilde{\mathcal{B}}_\mathfrak{m}$ is determined by:
        \begin{equation}
            n_1\geq 0,\qquad n_2\geq 0,\qquad n_1+n_2\geq t^\text{ex}.
        \end{equation} 
        The slices we obtain are depicted in \autoref{fig.CP2slices}. For the topologically twisted theory the slices are compact while those for the exotic theory are non-compact. This property is due to the complexes of the two theories being, respectively, elliptic and transversally elliptic. In the trivial flux sector, the results agree with those computed using \cite{Mauch:2021fgc}.
        \begin{figure}[h!]
            \centering
            \tdplotsetmaincoords{0}{0}
            \begin{tikzpicture}[scale=0.65,tdplot_main_coords]
            \filldraw[draw=blue,fill=blue!20,opacity=0.5]       
            (2,0,0)
            -- (0,2,0)
            -- (0,0,2)
            -- cycle;
            \draw[step=1.0,gray!60] (-.5,-.5) grid (5.2,5.2);
            \draw[thick,-stealth] (-0.5,0,0) -- (5.2,0,0) node[anchor=north]{$n_1$};
            \draw[thick,-stealth] (0,-0.5,0) -- (0,5.2,0) node[anchor=east]{$n_2$};
            \draw[thick,blue] (0,2,0) -- (2,0,0) -- (0,0,2) -- cycle;
            \end{tikzpicture}
            \hspace{6em}
            \tdplotsetmaincoords{0}{0}
            \begin{tikzpicture}[scale=0.65,tdplot_main_coords]
            \filldraw[draw=blue,fill=blue!20,opacity=0.5]       
            (2,0,0)
            -- (0,2,0)
            -- (0,5,2)
            -- (5,5,6)
            -- (5,0,2)
            -- cycle;
            \draw[step=1.0,gray!60] (-.5,-.5) grid (5.2,5.2);
            \draw[thick,-stealth] (-0.5,0,0) -- (5.2,0,0) node[anchor=north]{$n_1$};
            \draw[thick,-stealth] (0,-0.5,0) -- (0,5.2,0) node[anchor=east]{$n_2$};
            \draw[thick,blue] (0,5,2) -- (0,2,0) -- (2,0,0) -- (5,0,2);
            \end{tikzpicture}
            \caption{Slices for $S^5$. Left side: $\mathcal{B}_{\mathfrak{m}}$ of the topologically twisted theory for $\alpha(\mathfrak{m})=2$. For $\alpha(\mathfrak{m})=0$ only the origin contributes. Right side: $\tilde{\mathcal{B}}$ of the exotic theory for $\alpha(\mathfrak{m})=2$ (blue). For $\alpha(\mathfrak{m})\leq 0$ the entire quadrant contributes.}
            \label{fig.CP2slices}
        \end{figure}      

        Lastly, we point out that the one-loop partition function around fluxes for the topological theory on $\mathbb{CP}^2$ has already been computed in \cite{Bershtein:2015xfa}. The result was expressed as a sum over equivariant fluxes, compared to our partition function summing over ``physical'' fluxes. This makes a direct comparison rather difficult and it would be desirable to gain an understanding of the relation between the two results in the future.

    \subsection{$Y^{p,q}$}\label{subsec.example.Ypq}

        As a second example, we consider the infinite class of Sasaki-Einstein manifolds $Y^{p,q}$, which are either quasi-regular or irregular. They are homeomorphic to $S^2\times S^3$ and have been introduced in \cite{Gauntlett:2004yd,Martelli:2004wu} (see also \autoref{subsec.Ypq}). The perturbative partition function of the $\mathcal{N}=1$ vector multiplet on $Y^{p,q}$ has been computed in \cite{Qiu:2013pta,Qiu:2013aga,Schmude:2014lfa} and the full partition function \eqref{eq.fullpartitionTS}, including contact instanton contributions, is conjecturally obtained by gluing Nekrasov partition functions at each fixed fibre. Using our reduction procedure we will obtain the one-loop partition function around fluxes at the zero instanton sector on $B=Y^{p,q}/S^1$ (which is homeomophic to $S^2\times S^2$). As this is a new result, we give a more detailed presentation.
        
        A basis for an effective $T^3$-action has been introduced in \eqref{eq--2.2.basis}. The edge vectors $\Vec{u}_i$ and inward-pointing normals of the four-faceted cone have been computed in \eqref{eq--2.2.edge} and \eqref{eq--2.2.normal}. We also recall that the vector field generating a free $S^1$-action is obtained by solving \eqref{eq--2.1.free}, resulting in the choice of fibre \eqref{eq--2.2.fibre}:
        \begin{equation}\begin{split}\label{eq.freeactionYpq}
            \vec{\X}^\text{ex}=[0,0,1],
        \end{split}\end{equation}
        and a ++\tm\tm$\,$distribution of SD/ASD complexes on the base manifold. Correspondingly, the charges \eqref{eq.tgeneric} under rotations along the fibre are:
        \begin{equation}\begin{split}\label{eq.tYpq}
            t^\text{ex}=n_1l_1+n_2l_2+n_3l_3=n_3.
        \end{split}\end{equation}

        \paragraph{Effectively Acting $T^3$.}
            When introducing a quotient along the free $S^1$, and thus considering $X=Y^{p,q}/\mathbb{Z}_h=Y^{hp,hq}$, the four fixed points of the base manifold $B$ do not change. However, the quotient affects the submanifolds found at $y=y_1,y_2$:
            \begin{equation}
                y=y_1:\;S^3/\mathbb{Z}_{\lcm(h,k)},\quad y=y_2:\;S^3/\mathbb{Z}_{\lcm(h,l)}
            \end{equation}
            (remember $k=p+q,\,l=p-q$).      Hence, we modify the basis for the effectively acting $T^3$ \eqref{eq--2.2.basis} accordingly:
            \begin{equation}\begin{split}\label{eq.Ypq/Zhbasis}
                &e_1=-\partial_\phi-\partial_\psi,\\
                &\tilde{e}_2=\partial_\phi-\tfrac{\lcm(h,l)}{2}\partial_\gamma,\\
                &e_3=\partial_\gamma.
            \end{split}\end{equation}
            It is useful to study the matrix which relates the previous basis \eqref{eq--2.2.basis} to \eqref{eq.Ypq/Zhbasis}:
            \begin{equation}
                A=\left(\begin{array}{ccc}
                1 & 0 & 0\\
                0 & 1 & \frac{l-\lcm(h,l)}{2}\\
                0 & 0 & 1
                \end{array}\right),\quad A\cdot \left(\begin{array}{c}
                e_1  \\
                e_2  \\
                e_3
                \end{array}\right)=\left(\begin{array}{c}
                e_1  \\
                \tilde{e}_2  \\
                e_3
            \end{array}\right),\quad\mbox{det}A=1.
            \end{equation}
            For $l$ even, or for both $l$ and $h$ odd, $A\in SL(3,\mathbb{Z})$. In this case, we can keep the previous basis \eqref{eq--2.2.basis} as we always have the freedom to rotate by an $SL(3,\mathbb{Z})$-transformation. For $l$ odd and $h$ even, we can have $A\notin SL(3,\mathbb{Z})$ and need to use \eqref{eq.Ypq/Zhbasis} instead. However, we are mainly interested in the dimensional reduction to $B=Y^{p,q}/S^1$. Hence, in the following, to avoid the need of introducing \eqref{eq.Ypq/Zhbasis}, if $l$ is odd we simply choose $h$ to be odd.

        \paragraph{Example: $Y^{2,1}$.}
            In the following, we show explicitly how the one-loop partition function around fluxes on $B=Y^{2,1}/S^1$ arises from slicing the cone for the choice of fibre above.
            From the symplectic reduction in \autoref{subsec.Ypq}, we can express the coefficients $\R_i$ of the Reeb in terms of the general equivariance parameters:
            \begin{equation}\label{eq.Ypqomegas}
                \omega_1=\left(\frac{3}{2}+\frac{1}{2\ell}\right)+a_1,\quad\omega_2=a_2,\quad\omega_3=\left(\frac{3}{2}+\frac{1}{2\ell}\right)+a_3,\quad\omega_4=-\frac{1}{\ell}+a_4,
            \end{equation}
            where we have included deformations $a_i$ that will control the equivariance parameters of the dimensionally reduced 4d theory.
            Let us now repeat the procedure done earlier for $S^5$ by redefining
            \begin{equation}\begin{array}{rll}
                \epsilon_1^\text{ex}=\R_1, &  \epsilon_2^\text{ex}=\R_2,
            \end{array}\end{equation}
            and setting the deformation to act only on the base:
            \begin{equation}\begin{split}
                -2a_2-a_3=0.
            \end{split}\end{equation}
            This leads to:
            \begin{equation}\begin{split}
                \R^\text{ex}_3=-\frac{1}{2}\left(3+\frac{1}{\ell}\right).
            \end{split}\end{equation}
            We are now ready to substitute into the perturbative partition function \eqref{eq.tperturbativeTS1}. In terms of $\Upsilon^{\mathcal{B}_t}$-functions \eqref{eq.upsilon}, we find:
            \begin{equation}\label{eq.ZexY21}
                Z_{Y^{2,1}}^\text{ex}= \prod_{\alpha\in\Delta}\prod_{t\in\mathbb{Z}}\Upsilon^{{\mathcal{B}}_t}( \ii\alpha(a) +\R_3 t|\epsilon_1,\epsilon_2).
            \end{equation}
            Here, $\mathcal{B}_t$ is defined as in \eqref{eq.2dslice} using $\Vec{\X}^\text{ex}$. We show the slices $\mathcal{C}_t$ of the cone $\mathcal{C}$ and their projections $\mathcal{B}_t$ in \autoref{fig.Y21cones}.
            \begin{figure}[h!]
                \centering
                \tdplotsetmaincoords{100}{15}
                \begin{tikzpicture}[scale=0.65,tdplot_main_coords]
                \draw[thick,-stealth] (0,0,0) -- (0,0,3) node[anchor=south]{$\Vec{u}_1$};
                \draw[thick,-stealth] (0,0,0) -- (0,6,-3) node[anchor=east]{$\Vec{u}_2$};
                \draw[thick,-stealth] (0,0,0) -- (9,-3,-3) node[anchor=south]{$\Vec{u}_3$};
                \draw[thick,-stealth] (0,0,0) -- (3,-3,3) node[anchor=south]{$\Vec{u}_4$};
                \draw[thick,-stealth] (0,0,0) -- (0,0,1) node[anchor=east]{$\Vec{\X}^\text{ex}$};
                \filldraw[draw=gray,fill=gray!20,opacity=0.3]     
                    (0,0,0)
                    -- (9,-3,-3)
                    -- (3,-3,3)
                    -- cycle;
                \filldraw[draw=gray,fill=gray!20,opacity=0.3]
                    (0,0,0)
                    -- (9,-3,-3)
                    -- (0,6,-3)
                    -- cycle;
                 \filldraw[draw=gray,fill=gray!20,opacity=0.3]   
                    (0,0,0)
                    -- (0,6,-3)
                    -- (0,0,3)
                    -- cycle;
                 \filldraw[draw=gray,fill=gray!20,opacity=0.3]   
                    (0,0,0)
                    -- (0,0,3)
                    -- (3,-3,3)
                    -- cycle;
                \filldraw[ draw=green,fill=green!20,opacity=0.5] 
                    (2,-2,2)
                    -- (0,0,2)
                    -- (0,5,2)
                    -- (6,-4,2)
                    -- cycle;
                \draw[thick,green] (6,-4,2) -- (2,-2,2) -- (0,0,2) -- (0,5,2);
                \filldraw[ draw=blue,fill=blue!20,opacity=0.5]       
                    (0,0,0)
                    -- (0,5/6*6,0)
                    -- (3/2*4,-3/2*2,0)
                    -- cycle;
                \draw[thick,blue] (3/2*4,-3/2*2,0) -- (0,0,0) -- (0,5/6*6,0);
                \filldraw[ draw=red,fill=red!20,opacity=0.5]       
                    (3,-1,-1)
                    -- (0,2,-1)
                    -- (0,5,-1)
                    -- (6/5*5*1.15,-6/5*2*1.15,-1)
                    -- cycle;
                \draw[thick,red] (6/5*5*1.15,-6/5*2*1.15,-1) -- (3,-1,-1) -- (0,2,-1) -- (0,5,-1);
                \end{tikzpicture}
                \hspace{6em}
                \tdplotsetmaincoords{0}{0}
                \begin{tikzpicture}[scale=0.55,tdplot_main_coords]
                \filldraw[draw=green,fill=green!20,opacity=0.5] 
                    (2,-2,2)
                    -- (0,0,2)
                    -- (0,5,2)
                    -- (6,5,2)
                    -- (6,-4,2)
                    -- cycle;
                \draw[thick,green] (6,-4,2) -- (2,-2,2) -- (0,0,2) -- (0,5,2);
                \filldraw[ draw=blue,fill=blue!20,opacity=0.5]       
                    (0,0,0)
                    -- (0,5/6*6,0)
                    -- (6,5,0)
                    -- (3/2*4,-3/2*2,0)
                    -- cycle;
                \draw[thick,blue] (3/2*4,-3/2*2,0) -- (0,0,0) -- (0,5/6*6,0);
                \filldraw[ draw=red,fill=red!20,opacity=0.5]       
                    (3,-1,-1)
                    -- (0,2,-1)
                    -- (0,5,-1)
                    -- (6,5,-1)
                    -- (6/5*5,-6/5*2,-1)
                    -- cycle;
                \draw[thick,red] (6/5*5,-6/5*2,-1) -- (3,-1,-1) -- (0,2,-1) -- (0,5,-1);
                \draw[step=1.0,gray!60] (-.5,-4.2) grid (6.2,5.2);
                \draw[thick,-stealth] (-0.5,0) -- (6.2,0) node[anchor=south]{$n_1$};
                \draw[thick,-stealth] (0,-4.2) -- (0,5.2) node[anchor=east]{$n_2$};
                \draw[thick,green] (6,-4,2) -- (2,-2,2) -- (0,0,2) -- (0,5,2);
                \draw[thick,blue] (3/2*4,-3/2*2,0) -- (0,0,0) -- (0,5/6*6,0);
                \draw[thick,red] (6/5*5,-6/5*2,-1) -- (3,-1,-1) -- (0,2,-1) -- (0,5,-1);
                \end{tikzpicture}
                \caption{Left side: cone $\mathcal{C}$ of $C(Y^{2,1})$ sliced along $\vec{\X}^\text{ex}$ for $t=2$ (green), $t=0$ (blue) and $t=-1$ (red). Note that the slices are non-compact. Right side: ${\mathcal{B}}_{\mathfrak{m}}$ for $\alpha(\mathfrak{m})=0$ (blue), $\alpha(\mathfrak{m})=2$ (green) and $\alpha(\mathfrak{m})=-1$ (red). (In a slight abuse, we depict $\mathcal{C}$ and $\vec\X$ in the same ambient space.)}
                \label{fig.Y21cones}
            \end{figure}      
            
            Let us stress that, up until this point, the expression \eqref{eq.ZexY21} is just a rewriting of the perturbative partition function on $Y^{2,1}$. Expressions start to differ once we take quotients along $\vec\X^\text{ex}$. As explained previously, the one-loop partition function around flat connections is given by a sum over topological sectors (see \eqref{eq.perturbativeX}). Only those slices $\mathcal{B}_t$ satisfying the projection condition \eqref{eq--3.18.projcond} contribute to a given topological sector.
            
            Upon reducing to the base manifold $B\simeq S^2\times S^2$ by taking the large $h$ limit, we set $t=\alpha(\mathfrak{m})$ and obtain, for each flux sector,
            \begin{equation}\label{eq.ZexB21}
                Z_{B}^\text{ex}= \prod_{\alpha\in\Delta}\Upsilon^{{\mathcal{B}}_\mathfrak{m}}( \ii\alpha(a) +\R_3\alpha(\mathfrak{m})|\epsilon_1,\epsilon_2).
            \end{equation}
            The contributions $\mathcal{B}_\mathfrak{m}$ for each flux sector are obtained by projecting the slices in \autoref{fig.Y21cones} to the $(n_1,n_2)$-plane. Explicitly, the integer-valued vectors in the cone $\Vec{n}\in\mathcal{C}$ are determined by solving $\ev{\Vec{v}_i,\Vec{n}}\geq 0$. Substituting $n_3=t^\text{ex}$, the slice $\mathcal{B}_\mathfrak{m}$ is determined by
            \begin{equation}
                n_1\geq 0,\qquad n_1+n_2\geq -2t^\text{ex},\qquad n_1+2n_2\geq -t^\text{ex},\qquad n_1+n_2\geq 0.
            \end{equation}
            The slices are shown in \autoref{fig.Y21cones}. As for $S^5$, we obtain non-compact slices for the exotic theory. Again, this property is due to the complex of the theory being transversally elliptic. The result for the trivial flux sector agrees with the one computed in \cite{Festuccia:2016gul,Festuccia:2018rew,Mauch:2021fgc}.

        \paragraph{General Case.}
            For arbitrary $p,q$ the choice of fiber and the corresponding charges of the modes have been introduced in \eqref{eq.freeactionYpq}-\eqref{eq.tYpq}. Then, the (squashed) equivariance parameters read:
            \begin{equation}
                \omega_1=\left(\frac{3}{2}+\frac{1}{2(p-q)\ell}\right)+a_1,\quad\omega_2=a_2,\quad\omega_3=\left(\frac{3}{2}+\frac{1}{2(p-q)\ell}\right)+a_3,\quad\omega_4=\frac{1}{(q-p)\ell}+a_4.
            \end{equation}
            As in the previous examples, we define
            \begin{equation}\label{eq.epsilonsYpq}\begin{array}{rll}
                \epsilon_1^\text{ex}=\R_1, &  \epsilon_2^\text{ex}=\R_2,
            \end{array}\end{equation}
            and set the deformation to act only on the base:
            \begin{equation}\begin{split}\label{eq.Ypqbasesquashing}
                -pa_2+(q-p)a_3=0,
            \end{split}\end{equation}
            which is equivalent to
            \begin{equation}\begin{split}\label{eq.R3exYpq}
                \R_3^\text{ex}=-\frac{3}{2}(p-q)-\frac{1}{2\ell}.
            \end{split}\end{equation}
            Finally, substituting into \eqref{eq.perturbativeTS} and using the definition of $\Upsilon^{\mathcal{B}_t}$-functions in \eqref{eq.upsilon}:
            \begin{equation}\label{eq.ZexYpq}
                Z_{Y^{p,q}}^\text{ex}= \prod_{\alpha\in\Delta}\prod_{t\in\mathbb{Z}}\Upsilon^{{\mathcal{B}}_t}( \ii\alpha(a) +\R_3t|\epsilon_1,\epsilon_2),
            \end{equation}
            where, as for $Y^{2,1}$, $\mathcal{B}_t$ are defined in \eqref{eq.2dslice}.
            
            The one-loop partition function around flat connections on $X$ for the exotic theory is obtained simply by imposing the projection condition \eqref{eq--3.18.projcond} and summing over flat connections. Finally, taking the limit of large $h$, we find the one-loop partition function around fluxes on $B=Y^{p,q}/S^1$:
            \begin{equation}\label{eq.ZexBpq}
                Z_{B}^\text{ex}= \prod_{\alpha\in\Delta}\Upsilon^{{\mathcal{B}}_{\mathfrak{m}}}( \ii\alpha(a) +\R_3\alpha(\mathfrak{m})|\epsilon_1,\epsilon_2).
            \end{equation}
            Again, we obtain the slicings by studying the conditions on $(n_1,n_2)$, looking for solutions of $\ev{\Vec{v}_i,\Vec{n}}\geq 0$ at given $t$. Substituting $n_3=t^\text{ex}$, we find for ${\mathcal{B}}_\mathfrak{m}$:
            \begin{equation}
                n_1\geq 0,\qquad n_1+n_2\geq -p\,t^\text{ex},\qquad n_1+2n_2\geq -(p-q)\,t^\text{ex},\qquad n_1+n_2\geq 0.
            \end{equation}
            As a consistency check, one can see that these reduce to \eqref{eq.ZexB21} for $p=2,\;q=1$. The slices $\mathcal{B}_\mathfrak{m}$ depend explicitly on $p,q$. Although we do not have a proof, we believe that this dependence cannot be entirely removed. Notice that this is not in disagreement with \cite{Festuccia:2018rew}, where it is shown that infinitesimal deformations of the metric enter the Lagrangian through $\delta$-exact terms. But different values of $p,q$ correspond to variations of the metric on $B\simeq S^2\times S^2$ that are not connected to the identity.

    \subsection{$T^{1,1}$}

        As another example, let us consider the space $T^{1,1}$ whose metric cone $C(T^{1,1})$ is the conifold. The latter is obtained as the symplectic quotient $\mathbb{C}^4/\!/U(1)$ where the $U(1)$ acts with weights $(1,-1,1,-1)$. Details about $T^{1,1}$ can be found, e.g., in \cite{Martelli:2004wu}. Similarly to $S^5$, $T^{1,1}$ is Sasaki-Einstein, and the Reeb vector field generates a regular foliation whose leaf space is $S^2\times S^2$. 

        The inward-pointing normals of the corresponding moment map cone are given by
        \begin{equation}
            \vec{v}_1=[-1,0,1],\quad \vec{v}_2=[0,-1,1],\quad \vec{v}_3=[1,0,0],\quad \vec{v}_4=[0,1,0].
        \end{equation}
        Apart from the free direction $\vec\X^\text{top}=[0,0,1]$ proportional to the Reeb $\vec\R=[0,0,\frac{3}{2}]$, which corresponds to the ++++ distribution (there is also an exotic direction $\X^\text{ex}=[0,2,-1]$ corresponding to ++\tm\tm, which we leave as an exercise). Similar to the previous examples, we deform the Reeb vector field slightly to $\vec\R=[\omega_1,\omega_2,\omega_3]$, where $\omega_{1,2}\ll1$ and $\omega_3-\frac{3}{2}\ll1$. Keeping $\vec\R\cdot\vec\X$ constant under this deformation yields $\omega_1+\omega_2+\omega_3=\frac{3}{2}$, from which we can determine $\omega_3$. Then the one-loop partition function on the base $B=S^2\times S^2$ for fixed $\mathfrak{m}$ reads
        \begin{equation}
            Z^\text{top}_{S^2\times S^2}=\prod_{\alpha\in\Delta}\Upsilon^{\mathcal{B}_{\mathfrak{m}}}\left(\ii\alpha(a)+\left(\frac{3}{2}-\epsilon_1-\epsilon_2\right)\alpha(\mathfrak{m})\bigg|\epsilon_1,\epsilon_2\right),
        \end{equation}
        where $\epsilon_i\equiv\omega_i$ and the slices $\mathcal{B}_{t}$ are determined by the following inequalities:
        \begin{equation}
            t^\text{top}-n_1\ge0,\qquad t^\text{top}-n_2\ge0,\qquad n_{1,2}\ge0
        \end{equation}
        for all $(n_1,n_2)\in\mathcal{B}_t$ and $t^\text{top}=n_3$.

    \subsection{$A^{p,q}$}

        In this last example we consider a class of manifolds which are not Einstein (in contrast to $S^5,Y^{p,q}$). In particular, this means that the moment map cone $\mathcal{C}$ is not Gorenstein in this case. This property was exploited in \cite{Festuccia:2016gul} in view of retaining a spin structure on the base $B$ after reduction, which is needed for the ordinary formulation of the vector multiplet supersymmetry and for both formulations of the hypermultiplet one. As discussed in \autoref{sec--3.1}, we can drop this condition without any consequence.

        Apart from the Einstein property, the Sasakian manifolds $A^{p,q}$ with $p>q>0$ differ from $Y^{p,q}$ by their base space $A^{p,q}/S^1\simeq(\overline{\mathbb{CP}^2})^{\#2}$.
        The edge vectors of the moment map cone are given by
        \begin{equation}
            \vec{u}_1=[0,0,1],\quad \vec{u}_2=[q,0,1],\quad \vec{u}_3=[2p-q,p-q,1],\quad \vec{u}_4=[0,p,-1],
        \end{equation}
        along with the inward-pointing normals:
        \begin{equation}
            \vec{v}_1=[1,0,0],\quad \vec{v}_2=[0,1,0],\quad \vec{v}_3=[-1,2,q],\quad \vec{v}_4=[-1,1,p].
        \end{equation}
        Furthermore, we identify the fibre that generates a free $S^1$-action by solving \eqref{eq--2.1.free}:
        \begin{equation}\begin{split}\label{eq.Apq.ex}
            \mbox{ex:} &\quad\vec{\X}^\text{ex}=[0,0,1],
        \end{split}\end{equation}
        corresponding to a +++\tm$\,$ distribution of SD/ASD complexes on the base manifold. The charge under rotations along the fibre is given by $t^\text{ex}=n_3$. Similar to the previous example, we only find a single solution to \eqref{eq--2.1.free} and we are unable to access the topological theory on $B$ using the five-dimensional procedure. 
        
        For simplicity, in the following calculations we do not turn on a deformation of the Reeb vector and thus we simply relabel $\epsilon^\text{ex}_1=\R_1,\epsilon^\text{ex}_2=\R_2$. Substituting in \eqref{eq.perturbativeTS} gives:
        \begin{equation}\label{eq.Zex.Apq}
                Z_{A^{p,q}}^\text{ex}= \prod_{\alpha\in\Delta}\prod_{t\in\mathbb{Z}}\Upsilon^{{\mathcal{B}}_t}\big( \ii\alpha(a) +\R_3 t|\epsilon_1,\epsilon_2\big).
        \end{equation}
        The slices $\tilde{\mathcal{B}}_t$ are shown on the left in \autoref{fig.A21slices} for the case of $p=2,q=1$. As in the previous examples, the slices are normal to the vector $\vec{\X}^\text{ex}$.
        \begin{figure}[h!]
            \centering
            \tdplotsetmaincoords{100}{15}
            \begin{tikzpicture}[scale=0.75,tdplot_main_coords]
            \draw[thick,-stealth] (0,0,0) -- (0,0,2) node[anchor=south]{$\Vec{u}_1$};
            \draw[thick,-stealth] (0,0,0) -- (2,0,2) node[anchor=south]{$\Vec{u}_2$};
            \draw[thick,-stealth] (0,0,0) -- (6,2,2) node[anchor=south]{$\Vec{u}_3$};
            \draw[thick,-stealth] (0,0,0) -- (0,4,-2) node[anchor=east]{$\Vec{u}_4$};
            \draw[thick,-stealth] (0,0,0) -- (0,0,1) node[anchor=east]{$\vec{\X}^\text{ex}$};
            \filldraw[draw=gray,fill=gray!20,opacity=0.3]     
                (0,0,0)
                -- (6,2,2)
                -- (0,4,-2)
                -- cycle;
            \filldraw[draw=gray,fill=gray!20,opacity=0.3]
                (0,0,0)
                -- (6,2,2)
                -- (2,0,2)
                -- cycle;
             \filldraw[draw=gray,fill=gray!20,opacity=0.3]   
                (0,0,0)
                -- (2,0,2)
                -- (0,0,2)
                -- cycle;
             \filldraw[draw=gray,fill=gray!20,opacity=0.3]   
                (0,0,0)
                -- (0,0,2)
                -- (0,4,-2)
                -- cycle;
            \filldraw[ draw=green,fill=green!20,opacity=0.5] 
                (1,0,1)
                -- (0,0,1)
                -- (0,5,1)
                -- (9/2,5/2,1)
                -- (3,1,1)
                -- cycle;
            \draw[thick,green] (9/2,5/2,1)
                -- (3,1,1)
                -- (1,0,1)
                -- (0,0,1)
                -- (0,5,1);
            \filldraw[ draw=blue,fill=blue!20,opacity=0.5]       
                (0,0,0)
                -- (0,5/6*6,0)
                -- (3,3,0)
                -- cycle;
            \draw[thick,blue] (3,3,0) -- (0,0,0) -- (0,5/6*6,0);
            \filldraw[ draw=red,fill=red!20,opacity=0.5]       
                (0,2,-1)
                -- (0,5,-1)
                -- (3/2,7/2,-1)
                -- cycle;
            \draw[thick,red] (3/2,7/2,-1) -- (0,2,-1) -- (0,5,-1);
            \end{tikzpicture}
            \hspace{6em}
            \tdplotsetmaincoords{0}{0}
            \begin{tikzpicture}[scale=0.65,tdplot_main_coords]
            \filldraw[ draw=green,fill=green!20,opacity=0.5] 
                (1,0,1)
                -- (0,0,1)
                -- (0,5,1)
                -- (5/2+9/2,5/2+5/2,1)
                -- (3,1,1)
                -- cycle;
            \filldraw[ draw=blue,fill=blue!20,opacity=0.5]       
                (0,0,0)
                -- (0,5/6*6,0)
                -- (5/3*3,5/3*3,0)
                -- cycle;
            \filldraw[ draw=red,fill=red!20,opacity=0.5]       
                (0,2,-1)
                -- (0,5,-1)
                -- (3,5,-1)
                -- cycle;
            \draw[step=1.0,gray!60] (-.5,-.5) grid (7.2,5.2);
            \draw[thick,-stealth] (-0.5,0) -- (7.2,0) node[anchor=south]{$n_1$};
            \draw[thick,-stealth] (0,-0.5) -- (0,5.2) node[anchor=east]{$n_2$};
            \draw[thick,green] (5/2+9/2,5/2+5/2,1)
                -- (3,1,1)
                -- (1,0,1)
                -- (0,0,1)
                -- (0,5,1);
            \draw[thick,blue] (5/3*3,5/3*3,0) -- (0,0,0) -- (0,5/6*6,0);
            \draw[thick,red] (3,5,-1) -- (0,2,-1) -- (0,5,-1);
            \end{tikzpicture}
            \caption{$\mathcal{C}$ and ${\mathcal{B}}_\mathfrak{m}$ of $A^{2,1}$. Left side: cone $\mathcal{C}$ sliced along $\vec{\X}^\text{ex}$ for $t=1$ (green), $t=0$ (blue) and $t=-1$ (red). Note that the slices are non-compact. Right side: ${\mathcal{B}}_{\mathfrak{m}}$ for $\alpha(\mathfrak{m})=1$ (green), $\alpha(\mathfrak{m})=0$ (blue) and $\alpha(\mathfrak{m})=-1$ (red). (In a slight abuse, we depict $\mathcal{C}$ and $\vec\X$ in the same ambient space.)}
            \label{fig.A21slices}   
        \end{figure}
        
        We now introduce a quotient by $\mathbb{Z}_h$ acting on the fibre. This introduces a sum over flat connections and demands that we impose the projection condition on $t^\text{ex}$. Taking the large $h$ limit we find the one-loop partition function around fluxes on the base manifold:
        \begin{equation}\label{eq.Zex.A21.B21}
                Z_{B}^\text{ex}= \prod_{\alpha\in\Delta}\Upsilon^{{\mathcal{B}}_\mathfrak{m}}\big( \ii\alpha(a) +\R_3 \alpha(\mathfrak{m})|\epsilon_1,\epsilon_2\big).
            \end{equation}
        The form of the partition function is identical to \eqref{eq.ZexB21}, however, the slices ${\mathcal{B}}_{\mathfrak{m}}$ are different. They are found solving $\vec{v}_i\cdot\vec{n}\geq 0$:
        \begin{equation}
                n_1\geq 0,\qquad n_2\geq 0,\qquad -n_1+2n_2\geq -q\,t^\text{ex},\qquad -n_1+n_2\geq -p\,t^\text{ex}
        \end{equation}
        and we display them on the right hand side of \autoref{fig.A21slices}, again for the case of $p=2,q=1$. Notice that, in this example, we have not expressed $\R_3$ in terms of $\epsilon_1,\epsilon_2$. To achieve this we would need to study how the Kähler cone of $A^{p,q}$ can be obtained by symplectic reduction of $\mathbb{C}^4$. This would allow us to relate the components of the Reeb $\R_i$ to general equivariance parameters $\omega_j$, as in \eqref{eq--2.2.equivariance} for $Y^{p,q}$. This, in turn, would enable us to introduce a deformation of $\R$ acting on the base $B$ only and, finally, to write $\R_3$ in terms of $\epsilon_1,\epsilon_2$.

\section{Factorised Partition Functions}\label{sec--6}

    Given the equivariance in our setup, it is natural to ask whether the one-loop around fluxes can be expressed as a product of local contributions on a neighbourhood $\mathbb{C}^2_{\epsilon^i_1,\epsilon^i_2}\times S^1$ around the fixed fibres in 5d and, correspondingly, on a neighbourhood $\mathbb{C}^2_{\epsilon^i_1,\epsilon^i_2}$ around the torus fixed points in 4d. This factorisation property was confirmed for the perturbative partition function in \cite{Qiu:2014oqa} and \cite{Festuccia:2019akm}, respectively. 

    While, in the 5d case, the usual way of factorising $Z_M$ is by expressing the triple-sine function in terms of $q$-Pochhammer symbols, in our case we need to factorise $\Upsilon$-functions instead, due to the slicing of the cone. Their factorisation has been discussed in \cite{Festuccia:2019akm} and requires giving an imaginary part to the hitherto real vector field $\X$ and equivariance parameters $\epsilon_1,\epsilon_2$. Then we have the following factorisation property: 
    \begin{equation}\label{eq.factorisedTS}
        \prod_t\Upsilon^{\mathcal{B}_t}( \ii\alpha(a) +\tfrac{\R_3}{l_3}t|\epsilon_1,\epsilon_2)=\prod_{i=1}^m \prod_{t\in\mathbb{Z}}\Upsilon_i( \ii\alpha(a) +\beta_i^{-1}t|\epsilon^i_1,\epsilon^i_2)^{s_i},
    \end{equation}
    where $i$ labels the $m$ fixed fibres and $\epsilon^i_1,\epsilon^i_2,\beta_i^{-1}$ are the local equivariance parameters for the $T^3$-action on $\mathbb{C}^2_{\epsilon^i_1,\epsilon^i_2}\times S^1$, given by
    \begin{equation}
    	\beta_i^{-1}=(\Vec{v}_i\times\Vec{v}_{i+1})\cdot\vec{\R},\quad\epsilon^i_1=\frac{(\vec{\R}\times\Vec{v}_{i+1})\cdot\vec{\X}}{(\vec{v}_i\times\vec{v}_{i+1})\cdot\vec{\X}},\quad\epsilon^i_2=\frac{(\Vec{v}_i\times\vec{\R})\cdot\vec{\X}}{(\vec{v}_i\times\vec{v}_{i+1})\cdot\vec{\X}}.
    \end{equation}
    The $\Upsilon_i$-functions are defined as follows:
    \begin{equation}
        \Upsilon_i(z|\epsilon_1,\epsilon_2)=\prod_{(j,k)\in\mathcal{D}_i}(\epsilon_1 j+\epsilon_2 k+z)\prod_{(j,k)\in\mathcal{D}_i^\prime}(\epsilon_1 j+\epsilon_2 k+\bar{z}).
    \end{equation}
    Note that this is essentially a generalisation of \eqref{eq.upsilon} where the second product now is over the region $\mathcal{D}^\prime_i$ which does not necessarily coincide with the interior of $\mathcal{D}_i$.
    The regions $\mathcal{D}_i,\mathcal{D}^\prime_i$ depend on the imaginary parts of $\epsilon_1^{i},\epsilon_2^{i}$ in the following way:
    \begin{equation}\begin{split}\label{eq.regularisation}
        \imp(\epsilon^{i}_1)>0,\; \imp(\epsilon_2^{i})>0:\quad&\mathcal{D}=\{(j,k)\in\mathbb{Z}^2\;|\;j\geq 0 \mbox{ and } k\geq 0\},\quad\quad\,\; s_i=+1,\\
        &\mathcal{D}^\prime=\{(j,k)\in\mathbb{Z}^2\;|\;j\geq 1 \mbox{ and } k\geq 1\},\\
        \imp(\epsilon^{i}_1)>0,\; \imp(\epsilon^{i}_2)<0:\quad&\mathcal{D}=\{(j,k)\in\mathbb{Z}^2\;|\;j\geq 0 \mbox{ and } k\leq -1\},\quad\;\;\, s_i=-1,\\
        &\mathcal{D}^\prime=\{(j,k)\in\mathbb{Z}^2\;|\;j\geq 1 \mbox{ and } k\leq 0\},\\
        \imp(\epsilon^{i}_1)<0,\; \imp(\epsilon^{i}_2)<0:\quad&\mathcal{D}=\{(j,k)\in\mathbb{Z}^2\;|\;j\leq -1 \mbox{ and } k\leq -1\},\quad s_i=+1,\\
        &\mathcal{D}^\prime=\{(j,k)\in\mathbb{Z}^2\;|\;j\leq 0 \mbox{ and } k\leq 0\}\\
        \imp(\epsilon^{i}_1)<0,\; \imp(\epsilon^{i}_2)>0:\quad&\mathcal{D}=\{(j,k)\in\mathbb{Z}^2\;|\;j\leq -1 \mbox{ and } k\geq 0\},\quad\;\;\, s_i=-1,\\
        &\mathcal{D}^\prime=\{(j,k)\in\mathbb{Z}^2\;|\;j\leq 0 \mbox{ and } k\geq 1\}.
    \end{split}\end{equation}
    The different choices in \eqref{eq.regularisation} are known as different regularisations at the fixed fibres.
    Note, in particular, that the sign of $s_i$ can be negative at the fixed fibres, for which the respective contributions of $\Upsilon_i$ appear in the denominator (hence, from the superdeterminant viewpoint, they can be interpreted as bosonic modes remaining after cancellations, instead of fermionic ones). We choose an arbitrary sign for the imaginary part of the equivariance parameters at one fixed fibre, from which the signs at all other fixed fibres follow. It can be shown \cite{Festuccia:2019akm} that the partition function is independent of this initial choice of sign.

    Once we consider the quotient $X=M/\mathbb{Z}_h$, one can still glue contributions on $\mathbb{C}^2_{\epsilon^i_1,\epsilon^i_2}\times S^1$ from the fixed fibres as in \eqref{eq.factorisedTS}, but has to impose the projection condition \eqref{eq--3.18.projcond} for the charge under rotations along the free $S^1$. Hence, for each topological sector:
    \begin{equation}
        Z_X=\prod_{\alpha\in\Delta}\prod_{i=1}^m \prod_{t=\alpha(\mathfrak{m})\mmod h}\Upsilon_i( \ii\alpha(a) +\beta_i^{-1}t|\epsilon^i_1,\epsilon^i_2)^{s_i}
    \end{equation}
    and, upon reduction, we can express the one-loop partition function around fluxes on $B$ in a factorised form as a product over the torus fixed points:  
    \begin{equation}\label{eq--6.Bpert}
        Z_B=\prod_{\alpha\in\Delta}\prod_{i=1}^m \Upsilon_i( \ii\alpha(a) +\beta_i^{-1}\alpha(\mathfrak{m})|\epsilon^i_1,\epsilon^i_2)^{s_i}.
    \end{equation}
    Now we can simply read out the shifts of the Coulomb branch parameter by the flux contributions at each fixed point and write down the classical and instanton contributions to $\mathcal{Z}_B$ explicitly. Using equivariant localisation, the classical piece gives
    \begin{equation}\label{eq--6.classical}
        e^{-S_\text{cl}}=\exp\left(-\sum_{i=1}^m\frac{(2\pi)^2}{g_\text{YM,4d}^2(x_i)}\frac{\tr a^2}{\epsilon_1^i\,\epsilon_2^i}\right),
    \end{equation}
    with $g_\text{YM,4d}(x_i)$ the (position-dependent) 4d Yang-Mills coupling evaluated at the fixed points $x_i$ (see \cite{Festuccia:2016gul} for a detailed derivation of \eqref{eq--6.classical} starting from \eqref{eq.classicalSYM}, respectively\footnote{In the limit $h\to\infty$ we keep the product $g_\mathrm{YM}^2\cdot h$ fixed, where $g_\mathrm{YM}$ is the 5d YM coupling.} \eqref{eq--3.20.quotclassical}). Note that \eqref{eq--6.classical} has no flux-dependence\footnote{However, matching for example our result for the topologically twisted theory on $\mathbb{CP}^2$ with the one involving equivariant fluxes in \cite{Bershtein:2015xfa} requires, in each flux sector, a shift of $a$ by $\mathfrak{m}$. This would indeed introduce the flux-dependence expected from a 4d perspective.} which we anticipate, since already $S_\mathrm{cl}$ on $X$ is independent of the topological class of flat connections.
    
    The instanton piece is obtained as the standard product of Nekrasov partition functions on $\mathbb{C}^2_{\epsilon^i_1,\epsilon^i_2}$ over the fixed points, applying the appropriate shifts to $a$ \cite{Nekrasov:2003vi,Festuccia:2018rew}. For $\X$ such that the cohomologically twisted background on $B$ localises to instantons at $r$ of the $m$ fixed points and anti-instantons at the remaining ones, we obtain
    \begin{equation}\label{eq--6.Binst}
        Z_B^\text{inst}=\prod_{i=1}^r Z^{Nek}_{\mathbb{C}^2}(\ii a+\beta_i^{-1}\mathfrak{m}|\epsilon^i_1,\epsilon^i_2,q)\prod_{i=r+1}^m Z^{Nek}_{\mathbb{C}^2}(\ii a+\beta_i^{-1}\mathfrak{m}|\epsilon^i_1,\epsilon^i_2,\bar{q}),
    \end{equation}
    where $q=\exp(2\pi\ii\tau)$ is the usual instanton counting parameter.

    \paragraph{Example: $Y^{p,q}$.}
        The local equivariance parameters for corresponding to the free direction $\vec\X$ are considered in \autoref{subsec.example.Ypq} and are shown in \autoref{tab--6}.
        \begin{table}[h]
            \centering
            \caption{Local equivariance parameters for the exotic theory.}\label{tab--6}
            \begin{tabular}{ c || c | c | c | c }
                $i$ & 1 & 2 & 3 & 4\\ \hline
                & & & & \\[\dimexpr-\normalbaselineskip+3pt]
                $\epsilon_1^{\text{ex},i}$ & $\epsilon_1-\epsilon_2$ & $2\epsilon_1-\epsilon_2$ & $\epsilon_2-\epsilon_1$ & $\epsilon_2$\\
                $\epsilon_2^{\text{ex},i}$ &  $\epsilon_2 $ & $ \epsilon_2-\epsilon_1$ & $ 2\epsilon_1-\epsilon_2 $ & $\epsilon_1-\epsilon_2$\\ 
                $\beta_i^{-1}$ & $\R^\text{ex}_3$ &  $\left(p-q\right)\left(\epsilon_1-\epsilon_2\right)+\R^\text{ex}_3 $ & $ \left(p+q\right)\epsilon_1-q\epsilon_2-\R^\text{ex}_3 $ & $ p\epsilon_2-\R^\text{ex}_3 $
            \end{tabular}
        \end{table}
        We recall that $\R_3^\text{ex}$ has been defined in \eqref{eq.R3exYpq}. Assuming, without loss of generality, $\imp(\epsilon_1)>\imp(\epsilon_2)>0$, the perturbative partition function \eqref{eq.ZexYpq} becomes:
        \begin{equation}
            Z^\text{ex}_{Y^{p,q}}= \prod_{\alpha\in\Delta}\prod_{t\in\mathbb{Z}}\frac{\Upsilon_1( \ii\alpha(a) +\beta^{-1,\text{ex}}_1t|\epsilon_1^{\text{ex},1},\epsilon_2^{\text{ex},1})\cdot\Upsilon_4( \ii\alpha(a) +\beta^{-1,\text{ex}}_3t|\epsilon_1^{\text{ex},4},\epsilon_2^{\text{ex},4})}{\Upsilon_2( \ii\alpha(a) +\beta^{-1,\text{ex}}_2t|\epsilon_1^{\text{ex},2},\epsilon_2^{\text{ex},2})\cdot\Upsilon_3( \ii\alpha(a) +\beta^{-1,\text{ex}}_4t|\epsilon_1^{\text{ex},3},\epsilon_2^{\text{ex},3})}.
        \end{equation}
        It is now straightforward to obtain $Z_{Y^{p,q}/\mathbb{Z}_h}$ and $Z_{Y^{p,q}/S^1}$ from this expression in the way described above.

    \paragraph{Index Computation.}
        An intrinsically four-dimensional treatment of the $\mathcal{N}=2$ theory on $B$ was proposed in \cite{Festuccia:2018rew}. The one-loop partition function can be obtained via an index computation for the transversally elliptic complex that arises from localisation. This computation is quite involved and has only been performed in the trivial flux sector so far \cite{Mauch:2021fgc}. With the result \eqref{eq--6.Bpert} at hand, at least for SD/ASD distributions reachable from 5d, one can reconstruct the index of the complex, thereby generalising it to non-trivial flux sectors.

\addtocontents{toc}{\protect\setcounter{tocdepth}{1}}

\section{Discussion}\label{sec--7}

    In this work, we have computed the Coulomb branch partition function, including flux contributions, for the 4d $\mathcal{N}=2$ vector multiplet on a large class of closed simply-connected four-manifolds $B$. We started on a  principal $S^1$-bundle over $B$ whose bundle space is a simply-connected toric Sasakian five-manifold. Making use of existing results for this setup, we have computed the one-loop partition function on finite quotients $X=M/\mathbb{Z}_h$. This was done by restricting the modes contributing to the superdeterminant to ones that satisfy the projection condition \eqref{eq--3.18.projcond} for the charge $t$ under the free $S^1$. Most importantly, since $\pi_1(X)\neq0$ we had to include non-trivial flat connections into the localisation locus. Taking the limit of large $h$, we finally obtained the one-loop partition function on $B$ with flux contributions originating from the 5d flat connections. Depending on the relative orientation of $\X$ and $\R$, the theory on $B$ can be either topologically twisted or exotic.
    Finally, we have factorised the one-loop partition functions on $M$, $X$, and $B$ and were able to read out the shifts of the Coulomb branch parameter by the fluxes at each fixed fibre/point. This enabled us to write down the Coulomb branch partition function, including instanton and classical parts. 

    Finally, we point out once again that, whenever $H^2(M;\mathbb{Z})\neq0$, our procedure might only produce part of the sum over all possible fluxes in the full partition function on $B$ (e.g. for $Y^{p,q}$, $H^2(Y^{p,q};\mathbb{Z})\simeq\mathbb{Z}$ and $\mathcal{Z}_{Y^{p,q}}$ is expected to contain a sum over these fluxes). In order to include the remaining flux contributions, it would be necessary to extend the 5d localisation locus accordingly, which we leave for future work. Once this is done, using our procedure, one obtains the full partition function on $B$ (again, for $Y^{p,q}$, $H^2(Y^{p,q}/S^1)\simeq\mathbb{Z}^2$: one sector is carried over from 5d, the other one is introduced by our procedure).

    \paragraph{BPS Strings.}
        Let us suggest an interpretation for our procedure to compute the four-dimensional one-loop partition function around fluxes in terms of the BPS objects of the 5d theories. Restricting, for simplicity, the gauge group to be $SU(2)$, it is known that all rank one 5d SCFTs descend from 6d E-string theory upon circle compactification\footnote{It is conjectured \cite{Jefferson:2018irk} that all 5d $\mathcal{N}=1$ SCFTs can be obtained via an RG flow from 6d $\mathcal{N}=(1,0)$ SCFTs \cite{Witten:1995zh,Strominger:1995ac,Seiberg:1996qx} on a circle}. The degrees of freedom at a generic point in the tensor branch\footnote{All known interacting $\mathcal{N}=(1,0)$ SCFTs include a tensor multiplet, whose components are a tensor field $B$, with anti-self-dual field strength, two fermions, and one scalar. At a generic point of the tensor branch the scalar acquires a vev.} of these 6d $\mathcal{N}=(1,0)$ SCFTs are tensionful strings. These can descend, taking the radius of the sixth dimension to be small, to either electrically charged BPS particles in 5d, if they wrap the $S^1$-fibre, or magnetically charged BPS strings coupling to $F_D=\star F$, if they do not wrap the $S^1$-fibre. If we further take our five-dimensional space-time to be a toric Sasakian manifold $M$, we find a 5d SCFT whose IR description is that of a weakly coupled $SU(2)$ gauge theory on $M$.   
        
        As conjectured in \cite{Lockhart:2012vp}, the partition function of an $\mathcal{N}=1$ vector multiplet on $S^5$ only has contributions from BPS particles while BPS strings do not contribute\footnote{Considering toric Sasakian manifolds with non-trivial $H^2(M;\mathbb{Z})$, it is expected that magnetic strings wrapping two-cycles in $M$ will contribute to flux sectors already in 5d. We limit the discussion in this paragraph to the new flux sectors which arise dimensionally reducing to 4d.}. However, with our dimensional reduction along a non-trivial fibre $S^1\hookrightarrow M\rightarrow B$, we can show explicitly how BPS strings in 5d give rise to new flux sectors in 4d by wrapping the fibre which is the orbit of the free $S^1$-action generated by $\Vec{\X}$. After taking the limit in which the fibre shrinks, these wrapped BPS strings result in magnetically charged particles in the 4d $\mathcal{N}=2$ theories. Therefore, at the trivial sector of these new fluxes arising from dimensional reduction, the partition function $Z_B$ receives contributions only from electrically charged BPS particles. Instead, at a generic flux sector, the partition function receives contributions from both electrically and magnetically charged BPS particles. At the intermediate step, on $X=M/\mathbb{Z}_h$, the partition function at a generic, non-trivial, flat connection sector, receives contributions from magnetically charged BPS strings. Notice that a string wrapping $h$ times the fibre of $X$ does not contribute since it behaves similarly to strings not wrapping the fibre at all. 
        
        A similar discussion holds for the factorised expression presented in \autoref{sec--6}. Thus, one can place the 6d $\mathcal{N}=(1,0)$ SCFT on $\mathbb{C}^2_{\epsilon^i_1,\epsilon^i_2}\times T^2$ and, after dimensionally reducing on the $T^2$, find the contributions of electric and magnetic BPS particles to the one-loop partition function in 4d $\mathcal{N}=2$ at the trivial instanton sector. These contributions arise from 6d effective strings wrapping, respectively, the sixth and the fifth direction.

    \subsection{Future Directions}

        \paragraph{Complete Partition Function.}
            In this work, we have determined the instanton part by computing the shifts of the Coulomb branch parameter at each fixed point and writing the partition function as a product of Nekrasov partition functions around the fixed points. Although this is common practice in 4d, a rigorous proof, to the best of our knowledge, is still missing. 
        
            Moreover, it would be nice to extend our approach to also include hypermultiplets. However, even in the cohomological formulation this requires a spin structure on $M$ and the reduction to 4d has to be carried out more carefully in order for the spin bundle on $M$ to descend to a spin (or at least spin$^c$) bundle on $B$. Such an analysis was performed in \cite{Festuccia:2016gul} for $M$ Sasaki-Einstein (using the ordinary formulation of supersymmetry) and it would be interesting to extend it according to our procedure.

            There are existing results for the partition function of the topologically twisted theory including flux on some compact toric 4-manifolds \cite{Bawane:2014uka,Rodriguez-Gomez:2014eza,Bershtein:2015xfa,Bershtein:2016mxz,Bonelli:2020xps} where the sum is over equivariant fluxes. It would be desirable to compare these to our result in terms of physical fluxes, possibly making use of a suitable notion of S-duality \cite{Festuccia:2020xtv}.  

            Finally, in order to explicitly compute the partition function, the contour integral of the Coulomb branch partition function over $a$ needs to be understood. Since in 5d SYM there is only one scalar field $\sigma$ and the fundamental group is finite, we expect there to be a unique choice of integration contour. If this choice commutes with the $h\to\infty$ limit, then we would have an easy recipe for the 4d contour whenever the 4d theory arises from 5d using our procedure. 
        
        \paragraph{Locally Free $S^1$-Actions.}
            In this work, we assumed the $S^1$-action generated by $\Vec{\X}$ to be globally free such that $B=M/S^1$ has no singularities. More generally, we could consider a locally free $\widetilde{S^1}$-action\footnote{For Sasakian manifolds, such actions are discussed in \cite{Sparks:2010sn}.}. In this case, $M$ will contain points with finite isotropy group that descend to orbifold singularities on $\widetilde{B}=M/\widetilde{S^1}$. The intermediate step would be to consider non-simply-connected orbifolds $\widetilde{X}=M/\widetilde{\mathbb{Z}_h}$, where the quotient acts on the fibre which is the orbit of the $\widetilde{S^1}$-action. 

            Let us briefly sketch how we think the procedure in \autoref{sec--4} can be adapted to these cases. For simplicity, let us consider the toric Sasakian manifold $S^5$ discussed in \autoref{sec--3.1}. We recall that the vectors spanning the edges of the two-dimensional dual cone are:
            \begin{equation}
            \vec{u}_1=[0,1,0],\quad \vec{u}_2=[0,0,1],\quad \vec{u}_3=[1,0,0]
            \end{equation}
            and thus the inward-pointing normals are $\Vec{v}_3=\Vec{u}_2$, $\Vec{v}_2=\Vec{u}_1$, $\Vec{v}_1=\Vec{u}_3$. The reduction of an $\mathcal{N}=1$ vector multiplet is performed along the orbits of the free $S^1$-actions, generated by:
            \begin{equation}
                \Vec{\X}^\text{top}=[1,1,1]\sim\R,\qquad\Vec{\X}^\text{ex}=[1,1,-1].
            \end{equation}
            However, locally free $\widetilde{S^1}$-actions are generated by:
            \begin{equation}\label{eq.nonfreefibres}
                \Vec{\Tilde{\X}}^\text{top}=[l_1,l_2,l_3],\qquad\Vec{\Tilde{\X}}^\text{ex}=[l_1,l_2,-l_3],
            \end{equation}
            where, as long as $l_1,l_2,l_3\in\mathbb{N}$ and $\mbox{gcd}(l_1,l_2,l_3)=1$, the action is effective. Following our procedure, the next step would be to introduce a quotient by $\mathbb{Z}_h$ acting on the fibres \eqref{eq.nonfreefibres}, sum over non-trivial flat connections and eventually take the large $h$ limit. At finite $h$, the resulting space is an orbifold with a conical singularity, representing a deficit angle. An in-depth analysis of this scenario is carried out in \cite{Mauch:2024uyt}. The procedure can be generalized starting from a generic toric Sasakian manifold $M$ and will appear in future work.
            
            The setup above is related to the study of two-dimensional weighted projective spaces, also known as spindles \cite{Ferrero:2020laf}. In \cite{Lundin:2021zeb} the reduction of an $\mathcal{N}=2$ vector multiplet from $S^3$ to $S^2$ is considered. As a consistency check, taking an $\mathcal{N}=1$ vector multiplet $S^3\times S^1$ as starting point and generalizing the reduction to locally free $S^1$-actions, one should reproduce the result of \cite{Inglese:2023wky} for both topological and exotic theories. Moreover, notice that on $S^3$, as the double sine function has huge cancellations between numerator and denominator, we expect the dependence on the choice of locally free $S^1$-action to only affect the shift\footnote{This is already what happens reducing on the two fibres $\vec{\X}^{\text{top}},\vec{\X}^{\text{ex}}$ of $S^3$ \cite{Lundin:2021zeb}.} in $\alpha(\mathfrak{m})$. Instead, reducing from 5d to 4d, we expect the different slicings to depend on the choice of locally free $S^1$-action as the triple sine function does not have such cancellations. These results would significantly enlarge the observables for SQFTs on orbifolds and, in the large-$N$ limit, they would allow an in-depth study of the gravitational block formulas conjectured in \cite{Faedo:2021nub,Faedo:2022rqx}.

        \paragraph{3-Sasakian Reduction}
            Another possibility to extend our procedure is to consider the dimensional reduction to 4d from a 7d $\mathcal{N}=1$ vector multiplet on a 3-Sasakian hypertoric manifold \cite{Polydorou:2017jha,Rocen:2018xwo,Iakovidis:2020znp}. Seven-dimensional 3-Sasakian manifolds are an $S^3$-fibration over a four-dimensional quaternion Kähler orbifold. However, when the $S^3$-action is free, the fibration is over a four-dimensional quaternion Kähler manifold. An example is the Hopf fibration $S^3\hookrightarrow S^7\rightarrow S^4$. For $n=2$, the perturbative partition function is given by a quadruple sine function and it is shown to factorise in contributions arising from the fixed point of the $T^2$-action on the four-dimensional base \cite{Iakovidis:2020znp}. In general, reducing on a sphere breaks half of the supersymmetry and thus, shrinking the size of the $S^3$, one ends up with an $\mathcal{N}=2$ theory in 4d. For example, we would expect to find both an equivariant version of Donaldson-Witten theory and Pestun's theory on $S^4$ from the same seven-dimensional theory. In general, one could consider the reduction to $\mathcal{N}=2$ theories from a generic quaternion Kähler manifold, thus enlarging the class of manifolds considered in our setup. Finally, we have shown how fluxes, whose nature is purely Abelian, arise from dimensional reductions along non-trivial $S^1$-fibrations over $B$. It would be interesting to understand what configurations arise when dimensionally reducing along $SU(2)$-fibrations.

    \paragraph*{Acknowledgments}
    
        We are grateful to Guido Festuccia, Jian Qiu, and Maxim Zabzine for stimulating discussions on the subject. We thank them, Matteo Sacchi, Leonardo Santilli and Itamar Yaakov for comments on the manuscript. We also thank Seidon Alsaody and Thomas Kragh for mathematical advice. RM acknowledges support from the Centre for Interdisciplinary Mathematics at Uppsala University. LR acknowledges support from the Shuimu Tsinghua Scholar Program.

\appendix

\section{Topology of Finite Quotients and the Base}\label{app-topology}

    In this appendix, we collect some topological facts about the quotient manifolds on which we place the $\mathcal{N}=1$ SYM theory in the main part of this work and the four-dimensional base.

    \subsection{Finite Quotients}
    
        Let $M$ be a closed, simply-connected manifold that has a free $S^1$-action. We consider the quotient of $M$ by the discrete subgroup $\mathbb{Z}_h\subset S^1$, $h\in\mathbb{N}_{\ge2}$. This quotient $X=M/\mathbb{Z}_h$ is again a smooth manifold with a free $(S^1/\mathbb{Z}_h\simeq S^1)$-action . However, $X$ is not simply-connected but has $\pi_1(X)\simeq\mathbb{Z}_h$. This follows from the fact that there is a fibration
        \begin{equation}
            \begin{tikzcd}[column sep=small,row sep=small]
                \mathbb{Z}_h\ar[r] & M\ar[d]\\
                & X
            \end{tikzcd}
        \end{equation}
        giving rise to a long exact sequence (LES) in homotopy:
        \begin{equation}\label{eq--LES.homotopy}
            \begin{tikzcd}[column sep=small,row sep=tiny]
                \dots\ar[r] & \pi_1(M)\ar[d,equal]\ar[r] & \pi_1(X)\ar[r] & \pi_0(\mathbb{Z}_h)\ar[d,equal]\ar[r] & \pi_0(M)\ar[d,equal]\ar[r] & \dots\\
                & 0 & & \mathbb{Z}_h & 0 &
            \end{tikzcd}
        \end{equation}
        Moreover, by continuing the LES above, we find $\pi_i(M)\simeq\pi_i(X)$ for $i>2$.
        
        In order to make sense of the projection condition \eqref{eq--3.2.projcond} we first use the following\footnote{In the following, we always consider (co)homology with integral coefficients, unless specified otherwise.}
            
        \begin{fact}\label{fact.1}
            $H^2(X)\simeq\mathbb{Z}_h\oplus\mathbb{Z}^{b_2(X)}$, where $b_2(X)$ denotes the second Betti number of $X$.
        \end{fact}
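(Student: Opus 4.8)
The plan is to deduce the isomorphism directly from the universal coefficient theorem, once the low-degree homology of $X$ has been pinned down. First I would note that the only genuinely new topological datum in low degrees is the fundamental group: since $M\to X$ is a finite covering it induces isomorphisms on all higher homotopy groups, and the long exact sequence \eqref{eq--LES.homotopy} yields $\pi_1(X)\simeq\mathbb{Z}_h$. By the Hurewicz theorem, $H_1(X)\simeq\pi_1(X)^{\mathrm{ab}}=\mathbb{Z}_h$, where the abelianisation is trivial because $\mathbb{Z}_h$ is already abelian. In particular $H_1(X)$ is pure torsion, with no free part.

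Next I would invoke the universal coefficient theorem in degree two, which provides the natural short exact sequence
\begin{equation*}
    0\longrightarrow\mathrm{Ext}^1(H_1(X),\mathbb{Z})\longrightarrow H^2(X)\longrightarrow\Hom(H_2(X),\mathbb{Z})\longrightarrow 0.
\end{equation*}
The extension term is computed from the previous paragraph as $\mathrm{Ext}^1(\mathbb{Z}_h,\mathbb{Z})\simeq\mathbb{Z}_h$, and this is the sole source of the torsion in $H^2(X)$. For the quotient term I would use that $\Hom(-,\mathbb{Z})$ annihilates any torsion subgroup of $H_2(X)$ and returns a free abelian group whose rank equals the rank of $H_2(X)$; hence $\Hom(H_2(X),\mathbb{Z})\simeq\mathbb{Z}^{b_2(X)}$ by the very definition of the second Betti number.

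To conclude, since $\mathbb{Z}^{b_2(X)}$ is free and therefore projective, the surjection admits a section and the sequence splits, giving $H^2(X)\simeq\mathbb{Z}_h\oplus\mathbb{Z}^{b_2(X)}$ as claimed. The argument is essentially routine, so the only point I would take care to stress is conceptual rather than computational: the $\mathbb{Z}_h$-factor in $H^2(X)$ descends entirely from $H_1(X)$ through the $\mathrm{Ext}$ term and is \emph{not} affected by whatever torsion $H_2(X)$ itself may carry, the latter being invisible to $\Hom(-,\mathbb{Z})$. As an independent cross-check I would keep in reserve the Gysin sequence for the circle bundle $X\to B$, which relates $H^\bullet(X)$ to $H^\bullet(B)$ and the Euler class; but the universal coefficient route is shorter and makes the origin of the torsion most transparent, which is precisely what is needed to justify the projection condition \eqref{eq--3.2.projcond}.
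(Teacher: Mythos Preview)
Your proof is correct and follows essentially the same route as the paper: Hurewicz gives $H_1(X)\simeq\mathbb{Z}_h$, the universal coefficient theorem in degree two yields the short exact sequence with $\Ext$-term $\mathbb{Z}_h$ and $\Hom$-term $\mathbb{Z}^{b_2(X)}$, and the sequence splits. Your justification of the splitting via projectivity of the free quotient is slightly more explicit than the paper's, which simply invokes the (non-natural) splitting of the UCT sequence, but the argument is otherwise identical.
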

        \begin{proof}
            We already know $\pi_1(X)\simeq\mathbb{Z}_h$ and thus, by the Hurewicz theorem, $H_1(X)\simeq\mathbb{Z}_h$. Now we apply the universal coefficient theorem (UCT) for integral coefficients:
            \begin{equation}\label{eq--UCT}
                \begin{tikzcd}[column sep=small]
                    0\ar[r] & \Ext^1_\mathbb{Z}(H_1(X),\mathbb{Z})\ar[r] & H^2(X)\ar[r] & \Hom_\mathbb{Z}(H_2(X),\mathbb{Z})\ar[r] & 0
                \end{tikzcd}.
            \end{equation}
            Since $H_\bullet(X)$ is finitely generated, we can decompose $H_\bullet(X)\simeq\mathbb{Z}^{b_\bullet(X)}\oplus T_\bullet$ with $b_\bullet(X)$ the Betti number and $T_\bullet$ the torsion part. But (see e.g. \cite{Hatcher:2002} p.195):
            \begin{align}
                \Hom_\mathbb{Z}(H_\bullet(X),\mathbb{Z})&\simeq\Hom_\mathbb{Z}(\mathbb{Z}^{b_\bullet(X)},\mathbb{Z})\oplus\Hom_\mathbb{Z}(T_\bullet,\mathbb{Z})\simeq\mathbb{Z}^{b_\bullet(X)},\\
                \Ext^1_\mathbb{Z}(H_\bullet(X),\mathbb{Z})&\simeq\Ext^1_\mathbb{Z}(\mathbb{Z}^{b_\bullet(X)},\mathbb{Z})\oplus\Ext^1_\mathbb{Z}(T_\bullet,\mathbb{Z})\simeq T_\bullet,\label{eq--ext.1}
            \end{align}
            where we have used that $\Hom$ and $\Ext$ preserve limits. Since \eqref{eq--UCT} splits (although not naturally), we arrive at the result.
        \end{proof}
        
        \begin{eg}
            Consider the space $Y^{p,q}$ from \autoref{subsec.Ypq} with $\gcd(p,q)\ge1$. Then $\pi_1(Y^{p,q})\simeq\mathbb{Z}_{\gcd(p,q)}$ (see \cite{Gauntlett:2004yd}, appendix A) and $H^2(Y^{p,q})\simeq\mathbb{Z}_{\gcd(p,q)}\oplus\mathbb{Z}$. 
        \end{eg}
        
        \autoref{fact.1} shows that the torsion part of $H^2(X)$ originates from $\pi_1(X)$. Since flat line bundles $\mathbb{C}^\times\hookrightarrow L\rightarrow X$ are characterised by the first Chern class $c_1(L)$, we need $c_1(L)\in\mathbb{Z}_h\subset H^2(X)$ for the projection condition \eqref{eq--3.2.projcond} to make sense. This is guaranteed by the following
        
        \begin{fact}
            $\im c_1\simeq\Ext^1(\pi_1(X),\mathbb{Z})$.
        \end{fact}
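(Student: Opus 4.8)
The plan is to identify the first Chern class map $c_1$ on flat line bundles with a Bockstein homomorphism and then read off its image from an exact sequence of coefficient groups. First I would recall that flat line bundles $\mathbb{C}^\times\hookrightarrow L\to X$ are classified up to isomorphism by $\Hom(\pi_1(X),U(1))\simeq H^1(X;U(1))$, where $U(1)=\mathbb{R}/\mathbb{Z}$ is taken with the \emph{discrete} topology (locally constant transition functions). Under this identification the topological first Chern class is precisely the connecting homomorphism $\beta$ in the long exact sequence induced by the coefficient sequence $0\to\mathbb{Z}\to\mathbb{R}\to U(1)\to0$. Thus $\im c_1=\im\beta$, and the whole statement reduces to computing $\im\beta$.

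The key step is to extract $\im\beta$ from the long exact sequence
\begin{equation*}
    \cdots\to H^1(X;\mathbb{R})\to H^1(X;U(1))\xrightarrow{\beta}H^2(X;\mathbb{Z})\to H^2(X;\mathbb{R})\to\cdots.
\end{equation*}
Since $\pi_1(X)\simeq\mathbb{Z}_h$ is finite, the Hurewicz theorem gives $H_1(X)\simeq\mathbb{Z}_h$, so $H^1(X;\mathbb{R})\simeq\Hom(H_1(X),\mathbb{R})=0$; hence $\beta$ is injective and
\begin{equation*}
    \im c_1=\im\beta=\ker\big(H^2(X;\mathbb{Z})\to H^2(X;\mathbb{R})\big)=\mathrm{Tor}\,H^2(X),
\end{equation*}
where the last equality holds because $H^2(X;\mathbb{R})\simeq H^2(X;\mathbb{Z})\otimes\mathbb{R}$ kills exactly the torsion of the finitely generated group $H^2(X;\mathbb{Z})$. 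Note that the exact sequence delivers both halves of what we need: the vanishing of $H^1(X;\mathbb{R})$ gives injectivity of $\beta$, and exactness at $H^2(X;\mathbb{Z})$ gives surjectivity of $c_1$ onto the full torsion subgroup, i.e. every torsion class is realised by a flat bundle.

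Finally I would re-use the universal coefficient computation from the proof of \autoref{fact.1}: the UCT sequence \eqref{eq--UCT} exhibits $\Ext^1(H_1(X),\mathbb{Z})$ as the kernel of $H^2(X)\to\Hom(H_2(X),\mathbb{Z})$, and since $\Hom(H_2(X),\mathbb{Z})$ is free this kernel is exactly $\mathrm{Tor}\,H^2(X)$ (cf.\ \eqref{eq--ext.1}). Because $\pi_1(X)\simeq\mathbb{Z}_h$ is abelian we have $H_1(X)\simeq\pi_1(X)$, so $\Ext^1(\pi_1(X),\mathbb{Z})\simeq\mathrm{Tor}\,H^2(X)\simeq\im c_1$, which is the claim (and all three groups are concretely $\mathbb{Z}_h$). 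The main subtlety I expect is the bookkeeping between the discrete and the topological $U(1)$ coefficient groups: only with locally constant $U(1)$-coefficients does $H^1(X;U(1))$ classify \emph{flat} bundles and does the Bockstein compute their integral Chern class, whereas with the standard topology one would instead recover the exponential sequence and hence all line bundles. Keeping this identification straight, rather than any hard computation, is the crux.
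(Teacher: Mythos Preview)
Your argument is correct and follows essentially the same route as the paper: both identify $c_1$ on flat line bundles with the connecting homomorphism in the long exact sequence coming from $0\to\mathbb{Z}\to\mathbb{R}\to U(1)\to 0$ (the paper uses the equivalent $0\to\mathbb{Z}\to\mathbb{C}\to\mathbb{C}^\times\to 0$), then invoke UCT and Hurewicz. The only cosmetic difference is that the paper extracts $\im c_1$ as the cokernel of the preceding map $H^1(X;\mathbb{C})\to H^1(X;\mathbb{C}^\times)$, computed explicitly as $\Hom(H_1(X)^{\mathrm{tor}},\mathbb{C}^\times)$, whereas you extract it as the kernel of the subsequent map $H^2(X;\mathbb{Z})\to H^2(X;\mathbb{R})$, i.e.\ the torsion subgroup of $H^2(X)$.
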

        \begin{proof}
            First, note that (isomorphism classes of) flat line bundles are precisely\footnote{We have $H^1(X;\mathbb{C}^\times)\simeq H^1(X;S^1)\simeq\Hom(\pi_1(X),S^1)$, where $X$ connected.} elements of $H^1(X;{\mathbb{C}}^\times)$. We have a SES of coefficient rings
            \begin{equation}
                \begin{tikzcd}[column sep=scriptsize]
                    0\ar[r] & {\mathbb{Z}}\ar[r] & {\mathbb{C}}\ar[r,"\exp"] & {\mathbb{C}}^\times\ar[r] & 0
                \end{tikzcd},
            \end{equation}
            giving rise to a LES in cohomology:
            \begin{equation}
                \begin{tikzcd}[column sep=scriptsize]
                    \dots\ar[r] & H^1(X;{\mathbb{C}})\ar[r,"f"] & H^1(X;{\mathbb{C}}^\times)\ar[r,"c_1"] & H^2(X;{\mathbb{Z}})\ar[r] & \dots
                \end{tikzcd}
            \end{equation}
            with the connecting homomorphism being the first Chern class (\cite{Putman:2012}, sec. 2.2).
            Since $X$ is connected, the universal coefficient theorem implies
            \begin{equation}
                H^1(X;{\mathbb{C}})\simeq\Hom(H_1(X;{\mathbb{Z}}),{\mathbb{C}}),\qquad H^1(X;{\mathbb{C}}^\times)\simeq\Hom(H_1(X;{\mathbb{Z}}),{\mathbb{C}}^\times).
            \end{equation}
            Under the isomorphisms above, the map $f$ in the LES corresponds to the map
            \begin{equation}
                \tilde{f}:\Hom(H_1(X;{\mathbb{Z}}),{\mathbb{C}})\longrightarrow\Hom(H_1(X;{\mathbb{Z}}),{\mathbb{C}}^\times),\;\varphi\longmapsto\exp\circ\varphi.
            \end{equation}
            The maps in the codomain that cannot be reached by $\tilde{f}$ are precisely those that map torsion elements non-trivially, hence $\coker\tilde{f}\simeq\Hom(H_1(X;{\mathbb{Z}})^\mathrm{tor},{\mathbb{C}}^\times)$. Using the first isomorphism theorem and the fact that $H_1(X;\mathbb{Z})^\mathrm{tor}$ is a finite Abelian group, we obtain
            \begin{equation}
                \im c_1\simeq\Hom(H_1(X;{\mathbb{Z}})^\mathrm{tor},{\mathbb{C}}^\times)\simeq\Ext^1(H_1(X;\mathbb{Z}),\mathbb{Z}).
            \end{equation}
            Applying the Hurewicz theorem gives the result.
        \end{proof}
    
        Hence, $c_1$ takes values in $\pi_1(X)\simeq\mathbb{Z}_h$ and the projection condition \eqref{eq--3.18.projcond} is justified.

    \subsection{The Base}

        As for the topology of $B$, we have a fibration (in particular, a principal $S^1$-bundle) over the base space $B=M/S^1$,
        \begin{equation}
            \begin{tikzcd}[column sep=small,row sep=small]
                S^1\ar[r] & M\ar[d]\\
                & B
            \end{tikzcd}
        \end{equation}
        from which we obtain the LES in homotopy:
        \begin{equation}
            \begin{tikzcd}[column sep=small,row sep=tiny]
                \dots\ar[r] & \pi_2(S^1)\ar[r]\ar[d,equal] & \pi_2(M)\ar[r]\ar[d,equal] & \pi_2(B)\ar[r,"f"] & \pi_1(S^1)\ar[r]\ar[d,equal] & \pi_1(M)\ar[r]\ar[d,equal] & \pi_1(B)\ar[r] & \pi_0(S^1)\ar[r]\ar[d,equal] & \dots \\
                & 0 & H^3(M) & & \mathbb{Z} & 0 & & 0 &
            \end{tikzcd}
        \end{equation}
        We conclude that $B$ is again simply-connected and (using the Hurewicz theorem and Poincaré duality) we have $\mathbb{Z}\subset H^2(B)$ from the surjection $f$. This is the subgroup generated by the two-cycle that is obtained from the three-cycle in $X$; the image of $c_1$ upon reduction takes values in this subgroup. However, the corresponding domain of $c_1$ are no longer flat connections ($\pi_1(B)=0$ and thus $H^1(B;\mathbb{C}^\times)=0$) but the subspace of connections $A$ with curvature $F_A$ satisfying \eqref{eq--3.3.flux} and $c_1(F_A)\in\mathbb{Z}\subset H^2(B)$.  

        Finally, we present a more formal argument for the dimensional reduction of $M$ to $B=M/S^1$ via the quotient $X=M/\mathbb{Z}_h$. Let $X_n:=M/Z_n$ and $Z_n:=\mathbb{Z}/h^n\mathbb{Z}$. Then we have the direct system
        \begin{equation}\label{eq--app.directsystem}
            \begin{tikzcd}[]
                X_1\ar[r,two heads,"f_1"] & X_2\ar[r,two heads,"f_2"] & X_3\ar[r,two heads,"f_3"] & X_4\ar[r,two heads,"f_4"] &\dots
            \end{tikzcd}
        \end{equation}
        with $f_n$ the canonical projections. The colimit of \eqref{eq--app.directsystem} (in the category of compactly generated weakly Hausdorff spaces, CGWH) is given by
        \begin{equation}
            \begin{tikzcd}
                X_1\ar[r,two heads,"f_1"]\ar[drr,"\iota_1"'] & X_2\ar[r,two heads,"f_2"]\ar[dr,"\iota_2"] & X_3\ar[r,two heads,"f_3"]\ar[d,"\iota_3"] & X_4\ar[r,two heads,"f_4"]\ar[dl,"\iota_4"] &\dots\\
                & & \varinjlim X_n & &
            \end{tikzcd}
        \end{equation}
        such that the triangles commute.
        On the other hand, by virtue of the fibrations 
        \begin{equation}
            \begin{tikzcd}[column sep=small]
                S^1\ar[r,hook] & X_n\ar[r,"\pi_n"] & B,
            \end{tikzcd}
        \end{equation}
        we have another co-cone for \eqref{eq--app.directsystem} given by
        \begin{equation}
            \begin{tikzcd}
                X_1\ar[r,two heads,"f_1"]\ar[drr,"\pi_1"'] & X_2\ar[r,two heads,"f_2"]\ar[dr,"\pi_2"] & X_3\ar[r,two heads,"f_3"]\ar[d,"\pi_3"] & X_4\ar[r,two heads,"f_4"]\ar[dl,"\pi_4"] &\dots\\
                & & B & &
            \end{tikzcd}
        \end{equation}
        Then, by the universal property of $\varinjlim X_n$, there is a unique, continuous map $\phi:\varinjlim X_n\rightarrow B$ such that $\pi_n=\phi\circ\iota_n$ for all $n\in\mathbb{N}$. We now want to find a map $\psi$ such that the triangles in the following diagram commute:
        \begin{equation}
            \begin{tikzcd}
                \dots\ar[r,two heads] & X_n\ar[rr,two heads,"f_n"]\ar[dr,"\pi_n"]\ar[ddr,"\iota_n"'] & & X_{n+1}\ar[r,two heads]\ar[dl,"\pi_{n+1}"']\ar[ddl,"\iota_{n+1}"] & \dots\\
                & & B\ar[d,"\psi"] & &\\
                & & \varinjlim X_n & &
            \end{tikzcd}
        \end{equation}
        There is a unique such map $\psi:B\rightarrow\varinjlim X_n$, $b\mapsto\iota_1(\tilde{b})$ for some $\tilde{b}\in X_1$ such that $\pi_1(\tilde{b})=b$. But then $\psi$ must be the unique isomorphism. Moreover, since both spaces are compact and Hausdorff\footnote{This essentially follows from the fact that the weak Hausdorffification of $\varinjlim S^1/Z_n$ in Top is a point.}, $\psi$ is a homeomorphism. In fact, the smooth structure on $\varinjlim X_n$ expected from the quotienting procedure is precisely the one induced from $B$, which turns $\psi$ into a diffeomorphism.

\addtocontents{toc}{\protect\setcounter{tocdepth}{2}}

\bibliographystyle{utphys}
\bibliography{main}

\end{document}